\newtheorem{lemma}{Lemma}[section]
\newtheorem{theorem}[lemma]{Theorem}
\newtheorem{proposition}[lemma]{Proposition}
\newtheorem{corollary}[lemma]{Corollary}
\newcommand{\cL}{{\mathcal L}}
\newcommand{\cA}{{\mathcal A}}
\DeclareMathOperator*{\Block}{Block}
\newcommand{\iV}{\mathring{V}}
\title{Distinguished minimal toplogical lassos}
\author{Katharina T.\ Huber\address{School of Computing Sciences,
 University of East Anglia, Norwich, NR4 7TJ, UK}
\and
        George Kettleborough\address{School of Computing Sciences,
 University of East Anglia, Norwich, NR4 7TJ, UK}
}
\date{\today}
\begin{document}

\maketitle

\begin{abstract}
A classical result in distance based tree-reconstruction characterizes 
when for a distance $D$ on some finite set $X$ there exist a  uniquely
determined dendrogram on $X$ (essentially a rooted tree $T=(V,E)$ with 
leaf set $X$ and no degree two vertices but possibly the root and an 
edge weighting $\omega:E\to \mathbb R_{\geq 0}$) such 
that the distance $D_{(T,\omega)}$
induced by $(T,\omega)$ on $X$ is $D$.  Moreover,
algorithms that quickly reconstruct 
 $(T,\omega)$ from $D$ in this case are known.
However in many 
areas where dendrograms are being constructed such as Computational Biology
not all distances on $X$ are always available implying that
 the sought after dendrogram 
need not be uniquely determined anymore by the available distances
with regards to topology of the underlying tree, edge-weighting, or both. 
To better understand the 
structural properties a set $\cL\subseteq {X\choose 2}$ has to 
satisfy to overcome this problem,
various types of lassos have been introduced. 
Here, we focus on the question of when
a lasso  uniquely determines the topology of 
a dendrogram's underlying
 tree, that is, it is a topological lasso for that tree.
We show that any
set-inclusion minimal topological lasso for such a tree 
$T$ can be transformed into a 'distinguished' 
minimal topological lasso $\cL$ for $T$, that is, 
the graph $(X,\cL)$ is a claw-free block graph. Furthermore, we
characterize such lassos in terms of the novel concept
of a cluster marker map for $T$ and present results concerning 
the heritability of such lassos in the context of the subtree 
and supertree problems.
\end{abstract}

 \noindent {\bf Keywords:}
dendrogram, block graph, claw free, topological lasso, 
$X$-tree\\

\noindent{ \bf AMS:}
 05C05, 92D15

\pagestyle{myheadings}
\thispagestyle{plain}
\markboth{K.\,T.\,Huber AND G.\,Kettleborough}{Distinguished minimal 
topological lassos}

\section{Introduction}

In many topical studies in Computational Biology ranging from 
gene onthology via genome wide association studies in population genetics
to evolutionary genomics,
the following fundamental mathematical problem is encountered: 
Given a distance $D$
on some set $X$ of objects, find a dendrogram $\mathcal D$ on $X$ (essentially
a rooted tree $T=(V,E)$ with no degree two vertices but
possibly the root whose leaf 
set is $X$ together with an edge-weighting 
$\omega:E\to\mathbb R_{\geq 0}$
-- see Fig.~\ref{fig:block-graph-motivation}
for  examples) such that the distance induced by $\mathcal D$
on any two of its leaves $x$ and $y$ 
equals $D(x,y)$. In the ideal case that the distances between 
any two elements of $X$ are available, it is well-understood
when such a tree is uniquely determined by them and fast algorithms
for reconstructing it from them are known 
(see e.\,g.\,\cite[Chapter 9.2]{DHKMS11}
and \cite[Chapter 7.2]{SS03}  where dendrograms 
are considered in the slightly more general forms of dated rooted $X$-trees
and equidistant representations of dissimilarities, respectively,
and \cite[Chapter 3]{BG91} 
as well as the references in all three of them for more on this).
 
The reality however tends to be different in many cases in that
distances between pairs of objects might be 
missing or are  not sufficiently reliable to warrant 
inclusion of that distance in 
an analysis -- see e.g. \cite{P04,SMS10,SS10} for more on this topic
in an evolutionary genomics context). 
 Exclusion of such a distance might therefore be tempting but is clearly not 
always desirable which raises 
interesting mathematical, statistical, and algorithmical questions
(see e.\,g.\,\cite{DS84,F95,SG92} for a study concerning the latter
and \cite{F95,GG99,G04,M01} for results concerning its unrooted variant).
One of them is the focus of this paper: Calling 
any subset of a finite set $X$ of size two a {\em cord} of  $X$ 
then for what sets $\cL$ of cords of $X$ do we need to know the
distances so that both the topology of the underlying tree and the edge-weights
of the dendrogram on $X$ that induced the distances on the cords
in $\cL$ is uniquely determined by $\cL$?

To help illustrate the intricacies of this question which is concerned
with the structure of the set $\cL$ and not so much with 
the actual distances on the cords in $\cL$,
denote for any two distinct elements $a,b\in X$ the cord $\{a,b\}$
by $ab$. Consider the dendrogram $\mathcal D$ with leaf set
$X=\{a,\ldots, e\}$ depicted in
Fig.~\ref{fig:lasso-example}(i) and assume
 that the distances on the cords of $\cL=\{ac,de,bc,ce,cd\}$
are induced by $\mathcal D$ so, for example, the distance 
on the cord $ab$ 
is four. Then the dendrogram $\mathcal D'$
depicted in Fig.~\ref{fig:lasso-example}(ii) induces the
same distances on the cords in $\cL$ as $\mathcal D$
but the topologies of the underlying trees $T$ and $T'$
of $\mathcal D$ and $\mathcal D'$, respectively, 
are clearly not the same in the sense that there exists no
bijection from $V(T)$ to $V(T')$ that is the
identity on $\{a,\ldots, e\}$ and induces a  rooted graph 
isomorphism from  $T$ to $T'$.
Thus, $\cL$ does
not uniquely determine $T$ and thus also not $\mathcal D$. However
as can be quickly checked the situation changes if and only if
the cord $ab$ (or a subset of ${X\choose 2}$ containing that cord)
is added to $\cL$.  To make this more precise, 
let $\cL'$ denote the resulting set of cords on $X$
and let $\mathcal D_1$ denote a dendrogram on $X$ for which the
topology of the underlying tree 
is the same as that of $\mathcal D$. If $\mathcal D_2$ is a dendrogram
on $X$  such that the distances on the
cords in $\cL'$ induced by $\mathcal D_1$ and $\mathcal D_2$ coincide then,
as is easy to verify, the topologies of the 
underlying trees of $\mathcal D_1$ and $\mathcal D_2$,  
respectively, must be the same 
and so must be their edge-weightings. Thus, $\cL'$
uniquely determines $\mathcal D$. 
\begin{figure}[h]
\begin{center}
\input{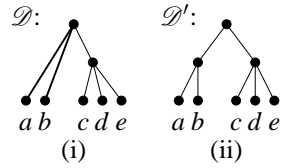}
\end{center}
\caption{\label{fig:lasso-example}
For $X=\{a,\ldots, e\}$ and $\cL=\{ac,de,bc,ce,cd\}$
the dendrogram $\mathcal D$  and $\mathcal D'$ are depicted in 
(i) and (ii), respectively. 
Bold edges
in $\mathcal D$ have weight two and all other edges as well 
as all edges in $\mathcal D'$
have weight one.
}
\end{figure}

Although an intriguing question, apart  from some recent results in \cite{HP13},
not much is known about it (see \cite{DHS11} and \cite{HS13}
for some partial results in case the tree in question is unrooted).
By formalizing a dendrogram in terms of a certain  edge-weighted
$X$-tree (see the next section for a precise definition
of this concept as well as all the other concepts
mentioned below) and using the concept of a
topological lasso which was originally introduced 
for unrooted phylogenetic trees with leaf set $X$
in \cite{DHS11} 
and extended to $X$-trees in \cite {HP13}, we study this question in
the form of when a set of cords of $X$  is a topological
lasso for a given $X$-tree $T$. In the context of this,
we are particularly interested in (set-inclusion) minimal topological
lassos $\cL$ for $T$ for which $
\bigcup \cL:=\bigcup_{A\in\cL} A=X$ holds. 

For $T$ an $X$-tree, we show for any such minimal topological
lasso $\cL$ for $T$ that in case the graph $\Gamma(\cL)$ whose
vertex set is $X$ and any two distinct elements $x$ and $y$
in $X$ joined by an edge if $xy\in \cL$ -- see
Fig~\ref{fig:block-graph-motivation}(i) for an example of
that graph for $\cL=\{ab,cd,ef,ac,ce,ea\}$ -- is a block
graph then the blocks of $\Gamma(\cL)$ are in 
one-to-one correspondence with
the non-leaf vertices of $T$ (Corollary~\ref{cor:bijection}). 
Furthermore, we establish in Theorem~\ref{theo:transform}
that any minimal topological lasso $\cL$ for $T$ can be transformed
into a very special type of minimal topological lasso $\cL^*$
for $T$ in that $\Gamma(\cL^*)$
is a claw-free block graph where a graph is called
{\em claw-free} if it does not contain a {\em claw}, that is, 
the complete bipartite graph $K_{1,3}$ as an induced subgraph \cite{H72}.
Claw-free graphs have been shown to enjoy numerous properties relating
them to, for example, perfect graphs, perfect matchings, 
and maximum independent sets
(see e.\,g.\,\cite{FFZ97} and \cite{CFHV12} for overviews).  
Furthermore, claw-free block graphs were related in
\cite{BL09} to $k$-leaf powers of trees and their spectrum
was studied in \cite{GS01, MSST06} (see also
\cite{BR13} for a more general study of the adjacency matrix
of such graphs). 
Calling a minimal topological lasso $\cL$ for $T$ {\em distinguished}
if $\Gamma(\cL)$ is a claw-free block graph,
we present in Theorem~\ref{theo:characterization}
a characterization of a distinguished
minimal topological lasso for $T$  in terms of the novel
concept of a cluster marker
map for $T$. In addition, we characterize when a
distinguished minimal topological lasso for $T$ 
gives rise to a distinguished minimal topological
lasso for a subtree of $T$ (Theorem~\ref{theo:subtree})
and also present a partial answer to the canonical
analogue of a question raised for supertrees of unrooted
phylogenetic trees in \cite{DHS11}.

The paper is organized as follows. In Section~\ref{sec:terminology},
we introduce relevant terminology surrounding $X$-trees and
lassos. In Section~\ref{sec:gamma-l-graph}, we 
collect first properties of the graph $\Gamma(\cL)$ associated
to a topological lasso $\cL$ and in Section~\ref{sec:blockgraph}, we 
establish Corollary~\ref{cor:bijection}. In Section~\ref{sec:distinguished}, 
we commence our study of distinguished minimal topological
and establish Theorem~\ref{theo:transform}. In Section~\ref{sec:sufficient},
we present a sufficient condition for when a 
minimal topological lasso is distinguished 
(Theorem~\ref{theo: distinguished-lasso-verification}) and in
Section~\ref{sec:characterization-distinguished}, we 
prove Theorem~\ref{theo:characterization}. We conclude with
Section~\ref{sec:subtree} where we establish Theorem~\ref{theo:subtree}
and also outline directions for further research.

\section{Basic terminology and assumptions}\label{sec:terminology}
In this section, we introduce some relevant 
basic terminology surrounding $X$-trees,
their edge-weighted counterparts, and lassos.  Assume throughout the paper  
that $X$ is a finite set with at least 3 elements and that, unless
stated otherwise, all sets $\cL$ of
cords of $X$ considered in this paper 
satisfy the property  that
$X=\bigcup \cL$.

\subsection{$X$-trees}
A {\em rooted tree} $T$ is a tree with a unique distinguished vertex 
called the
{\em root} of $T$, denoted by $\rho_T$. Throughout the paper, we 
assume that the degree of the root of a rooted tree is at least two.
A {\em rooted phylogenetic $X$-tree}, or {\em $X$-tree} for short, 
is a rooted tree $T=(V,E)$ with no degree two vertices but
possibly the root $\rho_T$ whose leaf set is $X$. We call 
an $X$-tree $T$ a {\em star-tree
on $X$} if every leaf of $T$ is adjacent with the root  of $T$ 

Suppose for the following  that
$T$ is an $X$-tree. Then we call a vertex of $T$ that is not
 a leaf of $T$ an {\em interior vertex}
of $T$ and denote the set of interior vertices of $T$ by $\iV(T)$.
We call an edge of $T$  that is incident with a leaf of $T$ a {\em pendant
edge} of $T$ and every edge of $T$ that is not a pendant edge
an {\em interior edge} of $T$.
Extending some of the terminology for directed graphs
to $X$-trees, we call for all vertices $v\in V(T)-\{\rho_T\}$
 an edge $e\in E(T)$
a  {\em parent edge of $v$} if $e$ is incident with $v$ and 
lies on the path from 
the root $\rho_T$ of $T$ to $v$. We refer to the vertex incident with $e$ 
 but distinct from $v$ as a {\em parent} of $v$.
 
Suppose for the following
that $v$ is an interior vertex of $T$. If $v$ is not the root of $T$
then we call an edge $e\in E(T)$ a {\em child edge of $v$} if $e$ is
incident with $v$ but is not crossed by the path from $\rho_T$ to $v$.  
In addition, we
call every edge incident with $\rho_T$ a {\em child edge of $\rho_T$}.
We call the vertex incident with a child edge of an
 interior vertex $w$ of $T$
 but distinct from $w$ a {\em child of $w$} and denote
 the set of all children of $v$ by $ch_T(v)$.
We call a vertex $w\in V(T)$ distinct from $v$
a {\em descendant} of $v$ if either $w$ is a child of $v$ or
there exists a path from $v$ to $w$ that crosses a child of $v$.
We denote the set of leaves 
of $T$ that are also descendants of $v$ by $L_T(v)$. If $v$ is a leaf
of $T$ then we put $L_T(v):=\{v\}$. If there
is no ambiguity as to which $X$-tree $T$ we are referring to then,
for all $v\in V(T)$,  we will write $L(v)$ rather than $L_T(v)$
and $ch(v)$ rather than $ch_T(v)$.  

We call a non-empty subset $L\subsetneq X$ of 
leaves of $T$ such that $L=L(v)$ holds for some
$v\in \iV(T)$ a {\em pseudo-cherry} of $T$. In that case,
we also call $v$ the {\em parent} of that pseudo-cherry. 
Note that every 
$X$-tree on three or more leaves must
contain at least one pseudo-cherry. Also note
that a pseudo-cherry of size two is 
a {\em cherry} in the usual sense (see e.g. \cite{SS03}).

For $x$ and $y$  distinct elements in $X$,
we call the unique vertex of $T$ that simultaneously lies on
the path from $x$ to $y$, on the path from $x$ to $\rho_T$, and
on the path from $y$ to $\rho_T$ the {\em last common ancestor of
$x$ and $y$}, denoted by $lca_T(x,y)$. More generally, for any 
subset $Y\subseteq X$ of size three or more,
we denote the subtree of $T$ with leaf set $Y$ and
vertices of degree two suppressed (except possibly the root)
by $T|_Y$ and call the root of $T|_Y$ the {\em last common ancestor of
$Y$}, denoted by $lca_T(Y)$.

Finally, suppose that $T'$ is be a further
$X$-tree. Then we say that $T$ and $T'$ 
are {\em equivalent} if there exists a bijection $\phi:V(T)\to V(T')$
that extends to a graph isomorphism between $T$ and $T'$ that is
the identity on $X$ and maps the root $\rho_T$ of $T$ to the
root $\rho_{T'}$ of $T'$. 

\subsection{Edge-weighted $X$-trees and lassos}

Suppose for the following again that $T$ is an $X$-tree.
An {\em edge weighting $\omega$ of $T$} is a map
$\omega :E(T)\to \mathbb R_{\geq 0}$ that maps every
edge of $T$ to a non-negative real. Suppose that $\omega$
is an edge-weighting for $T$. Then
we call the pair $(T,\omega)$ an {\em edge-weighted} $X$-tree and $\omega$
{\em proper} if $\omega(e)>0$ holds for every interior edge
$e$ of $T$. We denote the distance induced by $(T,\omega)$
on the leaves of $T$ by $D_{(T,\omega)}$
and call $\omega$ {\em equidistant} if 
\begin{enumerate}
\item[(i)] $D_{(T,\omega)}(x,\rho_T)=  D_{(T,\omega)}(y,\rho_T)$, for all
$x,y\in X$, and
\item[(ii)] $D_{(T,\omega)}(x,u)\geq  D_{(T,\omega)}(x,v)$, for all
$x\in X$ and all $u,v\in V$ such that $u$ is encountered
before $v$ on the path from $\rho_T$ to $x$.
\end{enumerate} 

Suppose $\cL$ is a set of cords of $X$. Then  
we call two edge-weighted $X$-trees 
$(T_1,\omega_1)$ and $(T_2,\omega_2)$  {\em $\cL$-isometric} if 
$D_{(T_1,\omega_1)}(x,y)=D_{(T_2,\omega_2)}(x,y)$ 
holds for all cords $xy\in \cL$. We say that $\cL$ is
a  {\em topological lasso} for $T$ if for every $X$-tree $T'$ and
any equidistant, proper edge-weightings $\omega$ of $T$ and $\omega'$ of
$T$',  we have that $T$ and $T'$ are equivalent
whenever $(T,\omega)$ and  $(T',\omega')$ are $\cL$-isometric.
If $\cL$ is a topological lasso for $T$ then we also say that $T$ is 
{\em topologically lassoed} by $\cL$. Moreover, 
we say that $\cL$ is  a {\em (set-inclusion) minimal topological 
lasso for $T$} if 
$\cL$ is a topological lasso for $T$ but no cord $A\in \cL$
can be removed from $\cL$ such that $\cL-\{A\}$ is still a
topological lasso for $T$. 
For ease of readability, if the $X$-tree to which a
topological lasso $\cL$ refers is of no relevance
to the discussion, we will simply say that $\cL$ is a 
topological lasso. 

To illustrate some of these definitions, let $X=\{a,\ldots, f\}$ and
let $\cL$ be the set of cords such that $\Gamma(\cL)$ is the
graph depicted in Fig.~\ref{fig:block-graph-motivation}(i).
Using e.\,g.\,\cite[Theorem~7.1]{HP13} (see also 
Theorem~\ref{theo:characterization-topology} below)
it is easy to see that the $X$-trees depicted in
Fig.~\ref{fig:block-graph-motivation}(ii) and (iii) respectively
are topologically lassoed by $\cL$. In fact, $\cL$ is a minimal 
topological lasso for both of them.
\begin{figure}[h]
\begin{center}
\input{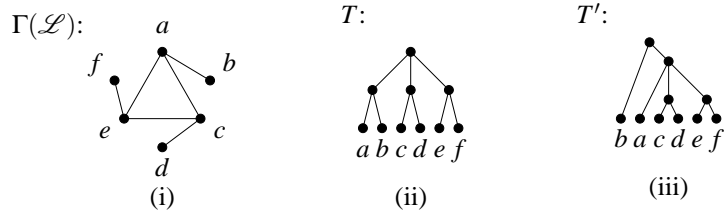}
\end{center}
\caption{\label{fig:block-graph-motivation}
(i) The graph $\Gamma(\cL)$ with vertex set $X=\{a,b,\ldots,f\}$
for the set $\cL=\{ab,cd,ef,ac,ce,ea\}$. (ii) and (iii)
Two non-equivalent $X$-trees $T$ and $T'$
that are both topologically lassoed by $\cL$. In fact, 
$\cL$ is a minimal topological lasso for either one of them.
}
\end{figure}

\section{The graphs $\Gamma(\cL)$ and 
$G(\cL,v)$}\label{sec:gamma-l-graph}

In this section, we investigate properties of the graph $\Gamma(\cL)$ 
associated to a 
set $\cL$ of cords of $X$. We start by remarking  that
if there is no danger of confusion, we denote an edge 
$\{a,b\}$ of $\Gamma(\cL)$ by $ab$ rather than $\{a,b\}$.

To establish our first structural result 
for $\Gamma(\cL)$ (see 
Proposition~\ref{prop:gamma-l-connected}), 
we require further terminology. 
Suppose $T$ is an $X$-tree, $v\in \iV(T)$, and 
$\cL$ is a 
set of cords of $X$. Then we call
the graph $G_T(\cL,v)=(V_{T,v},E_{T,v})$ with vertex set $V_{T,v}$
the set of all child edges of $v$ and edge set $E_{T,v}$ the
set of all $\{e,e'\}\in {V_{T,v}\choose 2}$ for which there
exist leaves $a,b\in X$ such that
$e$ and $e'$ are edges on the path from $a$ to $b$ in $T$ 
and $ab\in \cL$ holds the {\em child-edge graph of $v$ (with respect 
to $T$ and $\cL$)}. Note that in case there is no danger of 
ambiguity with regards to the $X$-tree $T$ we are referring to, we
 will write $G(\cL,v)$ rather than $G_T(\cL,v)$ and $V_v$ and $E_v$
rather than $V_{T,v}$ and $E_{T,v}$. The next result which was originally
established in \cite[Theorem 7.1]{HP13}
states a a crucial property of child-edge graphs.

\begin{theorem}\label{theo:characterization-topology}
Suppose $T$ is an $X$-tree and 
$\cL$ is a set of cords of $X$. 
Then the following are equivalent:
\begin{enumerate}
\item[(i)] $\cL$ is a topological lasso for $T$.
\item[(ii)] for every vertex $v\in \iV(T)$, the graph
$G(\cL,v)$ is a clique.
\end{enumerate}
\end{theorem}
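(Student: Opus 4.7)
\medskip
\noindent\textbf{Proof plan.}

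I would prove the two directions separately, treating (i)$\Rightarrow$(ii) by the contrapositive via an explicit construction, and (ii)$\Rightarrow$(i) by a cluster-matching induction on $|X|$.

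For the direction (i)$\Rightarrow$(ii), I argue the contrapositive. Suppose that for some $v \in \iV(T)$ the graph $G(\cL,v)$ fails to be a clique, witnessed by child edges $e=\{v,c\}$ and $e'=\{v,c'\}$ with $\{e,e'\}\notin E_{T,v}$; that is, no cord in $\cL$ has its $T$-path crossing both $e$ and $e'$. I build a competing $X$-tree $T^{*}$ by inserting a fresh interior vertex $w$ adjacent to $v$ and re-attaching $c$ and $c'$ as children of $w$; in the generic case when $v$ has at least three children, $T^{*}$ has no degree-two vertices and $L_{T^{*}}(w)=L_T(c)\cup L_T(c')$ is a new cluster, so $T^{*}\not\simeq T$. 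I then define $\omega^{*}$ on $T^{*}$ by picking a small $\delta>0$ with $\delta<\min\{\omega(e),\omega(e')\}$ and assigning weight $\delta$ to the edge $\{v,w\}$, weights $\omega(e)-\delta$ and $\omega(e')-\delta$ to $\{w,c\}$ and $\{w,c'\}$ respectively, and leaving all other weights untouched. Both equidistance and properness transfer from $\omega$ to $\omega^{*}$. For every cord $ab\in\cL$ the $T^{*}$-path from $a$ to $b$ either coincides with its $T$-path or replaces a single crossing of $e$ (resp.\ $e'$) with a two-edge subpath of the same total weight, so $(T,\omega)$ and $(T^{*},\omega^{*})$ are $\cL$-isometric, violating the topological lasso property. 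The degenerate case where $v$ has exactly two children requires a small dual modification (rearranging the subtrees of $c,c'$ against the sibling structure above $v$), using $X=\bigcup\cL$ to guarantee it produces an inequivalent tree.

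For (ii)$\Rightarrow$(i), assume $G(\cL,v)$ is a clique for every $v\in\iV(T)$ and take any $X$-tree $T'$ carrying equidistant proper weightings $\omega$ of $T$ and $\omega'$ of $T'$ with $(T,\omega)$ and $(T',\omega')$ $\cL$-isometric. I proceed by induction on $|X|$, aiming to match top-level clusters and then reduce to subtrees. Let $c_1,\ldots,c_k$ be the children of $\rho_T$ with $X_i=L_T(c_i)$. Because $\omega$ is proper on interior edges, distances between cords in $\cL$ are strictly decreasing as the $T$-lca descends, so the maximum cord distance in $\cL$ equals $2h_T(\rho_T)$ and is realised exactly by cords with $\mathrm{lca}_T=\rho_T$; the clique hypothesis at $\rho_T$ supplies such a cord $a_{ij}b_{ij}\in\cL$ for every pair $i\neq j$. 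Transporting these distances to $T'$ forces $h_{T'}(\rho_{T'})\geq h_T(\rho_T)$ and shows that each $\mathrm{lca}_{T'}(a_{ij},b_{ij})$ sits at height exactly $h_T(\rho_T)$ in $T'$. Combined with the fact that any cord in $\cL$ internal to some $X_i$ has distance strictly less than $2h_T(\rho_T)$, this pins down that $\{X_1,\ldots,X_k\}$ is also the top-level cluster partition of $T'$ and that $h_{T'}(\rho_{T'})=h_T(\rho_T)$. Since for each $v$ strictly below $c_i$ the graph $G(\cL,v)$ depends only on cords in $\binom{X_i}{2}$, the restriction $\cL\cap\binom{X_i}{2}$ satisfies the clique hypothesis for the subtree $T_i$ rooted at $c_i$ (and covers $X_i$), so the inductive hypothesis yields $T_i\simeq T'_i$ for each $i$, hence $T\simeq T'$.

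The main obstacle will be the top-level cluster matching in the induction step: one must rule out that $T'$ ``sprinkles'' the leaves of some $X_i$ across different branches of $\rho_{T'}$, or conversely that $\rho_{T'}$ lies strictly below the height $h_T(\rho_T)$ of $\rho_T$. I would overcome this by formalizing a ``witness-cord'' map $v\mapsto a_v b_v\in\cL$ with $\mathrm{lca}_T(a_v,b_v)=v$, tracking how each $v\in\iV(T)$ forces the existence of a corresponding vertex in $T'$ at height $h_T(v)$, and then using $X=\bigcup\cL$ together with the clique hypothesis iterated one level below to show that no leaf of $X_i$ can escape the $T'$-subtree identified by these witnesses.
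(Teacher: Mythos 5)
The paper does not actually prove this theorem: it imports it verbatim from \cite[Theorem~7.1]{HP13}, so there is no in-paper proof to compare against and your proposal has to stand as a self-contained argument. As such, your overall strategy is sound. For (ii)$\Rightarrow$(i) the reduction to the identity $D_{(T,\omega)}(x,y)=2h(\lca_T(x,y))$ for equidistant weightings is the right lever, and the ``main obstacle'' you name is the right one; it is resolvable exactly as you suggest, by first proving (from the clique hypothesis alone, bottom-up) that every induced subgraph $\Gamma_{c_i}(\cL)$ is connected --- this is the content of Proposition~\ref{prop:gamma-l-connected}, whose proof uses only condition (ii), so there is no circularity --- and then arguing: connectivity of $\Gamma(\cL)$ forces $h'(\rho_{T'})=h_T(\rho_T)$ (otherwise no cord would have its $T'$-lca at the root and all of $X$ would sit under one child of $\rho_{T'}$), connectivity of each $\Gamma_{c_i}(\cL)$ forces $X_i$ into a single child-cluster of $\rho_{T'}$, and the witness cords $a_{ij}b_{ij}$ force distinct $X_i$ into distinct child-clusters, whence the two top-level partitions coincide and the induction goes through.

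The one genuine gap is the degenerate case of (i)$\Rightarrow$(ii) when $v$ has exactly two children, i.e.\ when \emph{no} cord of $\cL$ has its lca at $v$. Your proposed fix, ``rearranging the subtrees of $c,c'$ against the sibling structure above $v$,'' works when $v\neq\rho_T$ (lift $c$ to the parent of $v$ and suppress $v$; no cord changes its lca because none had lca $v$), but it has nothing to act on when $v=\rho_T$: there is no sibling structure above the root. That sub-case needs a different construction --- e.g.\ note that every cord then lies entirely inside $L(c)$ or inside $L(c')$, choose $\omega$ with $h(c')>h(c)$, and replace $c'$ by its children as children of a new root at height $h(c')$ (at least one of $c,c'$ is interior since $|X|\geq 3$), which preserves all cord lca-heights while destroying the cluster $L(c')$. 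Relatedly, in the generic case you should say explicitly that you may \emph{choose} $\omega$ with $\omega(e),\omega(e')>0$; this is legitimate because refuting the lasso property only requires exhibiting one bad pair of weightings, but as written ``$\delta<\min\{\omega(e),\omega(e')\}$'' silently assumes the pendant weights are positive.
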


Denoting for an $X$-tree $T$, a topological lasso $\cL$ for
$T$, and an interior vertex $v\in \iV(T)$ 
the set of all cords $ab\in \cL$ for which
$v=lca_T(a,b)$ holds  by 
$\cA(v)$, Theorem~\ref{theo:characterization-topology}
readily implies $|\cA(v)|\geq {|ch(v)|\choose 2}$.
The next observation is almost trivial yet central to the paper and  
concerns the special case 
that $\cL$ is a minimal topological lasso for $T$. Its
proof which combines a straightforward counting argument 
with Theorem~\ref{theo:characterization-topology} is left to
the interested reader. To able to state it,
we denote for an interior vertex $v\in \iV(T)$ and a child edge $e\in E(T)$
of $v$  the child of $v$ indicent with $e$ by $v_e$.  

\begin{lemma}\label{lem:size-A(v)}
Suppose $T$ is an $X$-tree and $\cL$ is a minimal topological 
lasso for $T$. Then, for all $v\in \iV(T)$, we have 
$|\cA(v)|={|ch(v)|\choose 2}$. In particular, for any two distinct child 
edges $e_1$ and 
$e_2$ of $v$ there exists precisely one pair 
$(a_1,a_2)\in L(v_{e_1})\times L(v_{e_2})$ such that 
$a_1a_2\in\cL$.
\end{lemma}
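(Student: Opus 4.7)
The plan is to combine Theorem~\ref{theo:characterization-topology} with a pigeonhole argument. Since $\cL$ is a topological lasso for $T$, the graph $G(\cL,v)$ is a clique on vertex set $V_v$ (the child edges of $v$), hence has exactly $\binom{|ch(v)|}{2}$ edges. I would then introduce the map $\phi:\cA(v)\to E(G(\cL,v))$ that sends a cord $ab\in\cA(v)$ to the pair $\{e_1,e_2\}$ of child edges of $v$ lying on the path in $T$ from $a$ to $b$. This is well-defined precisely because $lca_T(a,b)=v$ forces the $a$-$b$ path to enter and leave $v$ via two distinct child edges. By the clique property, every pair of child edges of $v$ arises from some cord in $\cL$, and that cord must belong to $\cA(v)$; hence $\phi$ is surjective and $|\cA(v)|\geq\binom{|ch(v)|}{2}$.

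For the reverse inequality, I would argue by contradiction using minimality of $\cL$. Suppose $|\cA(v)|>\binom{|ch(v)|}{2}$; by pigeonhole, two distinct cords $a_1b_1,a_2b_2\in\cA(v)$ satisfy $\phi(a_1b_1)=\phi(a_2b_2)$. I then claim $\cL':=\cL\setminus\{a_1b_1\}$ is still a topological lasso for $T$. The crucial observation is that for any cord $ab\in\cL$, the cord contributes an edge to $G(\cL,w)$ only at $w=lca_T(a,b)$: at any other interior vertex on the $a$-$b$ path, exactly one of the two path-edges incident to that vertex is a child edge of it, while the other is its parent edge, so no pair of child edges of that vertex is produced. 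Consequently, removing $a_1b_1$ affects only $G(\cL,v)$, and there the edge $\phi(a_1b_1)=\phi(a_2b_2)$ remains witnessed by $a_2b_2$. So every $G(\cL',w)$ is still a clique, and Theorem~\ref{theo:characterization-topology} implies $\cL'$ is a topological lasso, contradicting minimality.

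Equality $|\cA(v)|=\binom{|ch(v)|}{2}$ now forces the surjection $\phi$ to be a bijection, which is exactly the ``in particular'' assertion: each pair of distinct child edges $e_1,e_2$ of $v$ is the $\phi$-image of a unique cord $a_1a_2\in\cL$, and the definition of $\phi$ places $a_1\in L(v_{e_1})$ and $a_2\in L(v_{e_2})$ (up to ordering). The main (mild) obstacle is the ``only at $lca$'' observation underpinning the minimality step; once that is in place, the counting argument is immediate.
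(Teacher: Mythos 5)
Your proof is correct, and it is precisely the argument the paper intends: the authors explicitly leave the proof to the reader, describing it as ``a straightforward counting argument'' combined with Theorem~\ref{theo:characterization-topology}, and your map $\phi$ together with the observation that a cord only witnesses a child-edge pair at the last common ancestor of its two elements is exactly that counting argument.
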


Note that Lemma~\ref{lem:size-A(v)} immediately implies that
any two minimal topological lassos for the same $X$-tree must
be of equal size. 

To be able to establish Proposition~\ref{prop:gamma-l-connected}, 
we require a further definition.
Suppose $T$ is an $X$-tree and $\cL$ is a topological 
lasso for $T$. Then for all $v\in V(T)$, we denote by $\Gamma_v(\cL)$
the subgraph of $\Gamma(\cL)$ induced by $L(v)$. Note that
in case $v$ is a leaf of $T$ and thus an element in $X$
the only vertex in $\Gamma_v(\cL)$ is $v$ (and 
$E(\Gamma_v(\cL))=\emptyset$).

\begin{proposition}~\label{prop:gamma-l-connected}
Suppose $T$ is an $X$-tree and $\cL$ is a topological lasso for $T$. 
Then, for all $v\in V(T)$, the graph $\Gamma_v(\cL)$ is connected. 
In particular, 
$\Gamma(\cL)$ is connected.
\end{proposition}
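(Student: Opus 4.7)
My plan is to prove the statement by induction on the number of vertices in the subtree of $T$ rooted at $v$ (equivalently, on $|L(v)|$), and then obtain the ``in particular'' conclusion as the special case $v=\rho_T$, since $L(\rho_T)=X$ gives $\Gamma_{\rho_T}(\cL)=\Gamma(\cL)$.

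For the base case I take $v$ to be a leaf of $T$. Then $L(v)=\{v\}$, so $\Gamma_v(\cL)$ consists of a single vertex and is trivially connected. For the inductive step, I consider an interior vertex $v\in\iV(T)$ and assume, as inductive hypothesis, that $\Gamma_w(\cL)$ is connected for every child $w\in ch(v)$. Since $L(v)=\bigsqcup_{w\in ch(v)} L(w)$, the vertex set of $\Gamma_v(\cL)$ decomposes as a disjoint union of the vertex sets of the subgraphs $\Gamma_w(\cL)$, each of which is connected by hypothesis. It therefore suffices to show that for any two distinct children $w,w'\in ch(v)$ there is at least one edge in $\Gamma_v(\cL)$ joining a vertex of $L(w)$ to a vertex of $L(w')$; this will force the connected components coming from the children to merge into a single connected component.

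To produce such cross-edges I invoke Theorem~\ref{theo:characterization-topology}: since $\cL$ is a topological lasso for $T$, the child-edge graph $G(\cL,v)$ is a clique. Let $e$ and $e'$ be the child edges of $v$ incident with $w$ and $w'$ respectively. Since $\{e,e'\}\in E(G(\cL,v))$, by the definition of $G(\cL,v)$ there exist leaves $a,b\in X$ with $ab\in\cL$ such that both $e$ and $e'$ lie on the path from $a$ to $b$ in $T$. A routine analysis of paths in a rooted tree then shows that one of $a,b$ must be a descendant of $w$ (hence lie in $L(w)$) and the other a descendant of $w'$ (hence lie in $L(w')$); indeed, the path from $a$ to $b$ crosses $v$ exactly once, using precisely the two child edges of $v$ on which its endpoints descend. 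Thus $ab$ is an edge of $\Gamma_v(\cL)$ connecting $L(w)$ and $L(w')$, completing the inductive step.

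I do not expect a serious obstacle here; the only point requiring care is the last path-analysis argument, where one must ensure that $e$ and $e'$ really are forced to be the two distinct child edges of $v$ on the $a$-$b$ path in $T$, so that $a$ and $b$ genuinely land in $L(w)$ and $L(w')$ rather than both below the same child. This is immediate from the fact that the unique path between two leaves in a tree visits each edge at most once, together with the definition of a child edge of $v$, so the argument closes cleanly.
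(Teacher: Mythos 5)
Your proposal is correct and follows essentially the same route as the paper: the paper argues by taking a minimal counterexample $v$ (all descendants $w$ having $\Gamma_w(\cL)$ connected) and uses the clique property of $G(\cL,v)$ from Theorem~\ref{theo:characterization-topology} to produce a cord joining $L(v_1)$ and $L(v_2)$ for any two children, which is exactly your inductive step phrased contrapositively. Your extra path analysis justifying that the endpoints of the cord land under the two distinct children is a correct elaboration of a step the paper leaves implicit.
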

\begin{proof}
Assume for contradiction that there exists some vertex $v\in V(T)$
such that  $\Gamma_v(\cL)$ is not connected. Then $v$ cannot be
a leaf of $T$ and so $v\in \iV(T)$ must hold. Without loss of generality
we may assume that $v$ is such that for all descendants $w\in V(T)$ of $v$
the induced graph $\Gamma_w(\cL)$ is connected. Since $\cL$
is  a topological lasso for $T$ and so $G(\cL,v)$ is a clique,
it follows for any two distinct children $v_1,v_2\in ch(v)$  that 
there exists a pair $(x_1,x_2)\in L(v_1)\times L(v_2)$
such that $x_1x_2\in \cL$.  Since the assumption on $v$ implies 
that the graphs 
$\Gamma_{w}(\cL)$ are connected for all children $w\in ch(v)$,   
it follows that $\Gamma_v(\cL)$ is connected which is impossible.
Thus, $\Gamma_v(\cL)$ is connected, for all  $v\in V(T)$.
That $\Gamma(\cL)$ is connected is a trivial consequence.
\qquad
\end{proof}

\section{The case that $\Gamma(\cL)$ is a 
block graph}\label{sec:blockgraph}

To establish a further property of $\Gamma(\cL)$ 
which we will do in Proposition~\ref{prop:x-i-unique}, we require some
terminology related to block graphs (see e.\,g.\,\cite{diestel}). 
Suppose $G$ is a graph. Then a vertex of $G$ 
is called a {\em cut vertex} if its deletion (plus its
incident edges) disconnects $G$. A graph is called a {\em block} 
if it has at least one vertex, is connected, and does 
not contain a cut vertex. A {\em block of a graph $G$} is a maximal connected
subgraph of $G$ that is a block and a graph is called a {\em block graph}
if all of its blocks are cliques. For convenience, we refer to a
block graph with vertex set $X$ as a {\em block graph on X}.

As the example of the two minimal topological lassos 
$
%\cL=
\{ab,cd,ef,ac,ce,ea\}$ and $
%\cL'=
\{ab, bc,$  $cd, de, ef, fa\}$ 
for the $\{a,\ldots,f\}$-tree
depicted in Fig.~\ref{fig:block-graph-motivation}(ii)
indicates, the graph $\Gamma(\cL)$  associated to a
 minimal topological lasso  $\cL$
may be but need not be a block graph. 
However if it is
then Lemma~\ref{lem:size-A(v)}
can be strengthened to the following central result
where for all positive integers $n$ we 
put $\langle n\rangle :=\{1,\ldots, n\}$ and set
$\langle 0\rangle:=\emptyset$.

\begin{proposition}\label{prop:x-i-unique}
Suppose $T$ is an $X$-tree and $\cL$ is a minimal topological 
lasso for $T$ such that $\Gamma(\cL)$ is a block graph.
Let $v\in \iV(T)$ and let $v_1,\ldots, v_l\in V(T)$ denote the
children of $v$ where
$l=|ch(v)|$.
Then, for all $i\in \langle l\rangle$, there exists a unique 
leaf $x_i\in L(v_i)$ such that $x_sx_t\in \cL$ holds for
all $s,t\in\langle l\rangle$ distinct.
\end{proposition}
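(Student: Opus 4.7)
The plan is to leverage Lemma~\ref{lem:size-A(v)} to produce, for every ordered pair $(i,j)$ of distinct indices in $\langle l\rangle$, the unique pair $(a^{ij}_i,a^{ij}_j)\in L(v_i)\times L(v_j)$ with $a^{ij}_ia^{ij}_j\in\cL$. The proposition then reduces to the claim that the ``$L(v_i)$-coordinate'' $a^{ij}_i$ does not depend on $j\neq i$, since once this is known we can set $x_i$ equal to this common value. The case $l=2$ is immediate from Lemma~\ref{lem:size-A(v)}, so the real content lies in $l\geq 3$.

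Assume $l\geq 3$ and argue by contradiction: suppose there are pairwise distinct $i,j,k\in\langle l\rangle$ with $a^{ij}_i\neq a^{ik}_i$, which after relabelling we may take to be $(i,j,k)=(1,2,3)$. Write $a=a^{12}_1$, $a'=a^{13}_1$, $b=a^{12}_2$, $b'=a^{23}_2$, $c=a^{13}_3$ and $c'=a^{23}_3$, so that the cords $ab$, $b'c'$ and $a'c$ all lie in $\cL$ and $a\neq a'$. Since $\cL$ is a topological lasso, Proposition~\ref{prop:gamma-l-connected} applied inside each of $\Gamma_{v_1}(\cL)$, $\Gamma_{v_2}(\cL)$ and $\Gamma_{v_3}(\cL)$ provides (possibly trivial) simple paths $P_1$ from $a'$ to $a$, $P_2$ from $b$ to $b'$ and $P_3$ from $c'$ to $c$, respectively.

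Concatenating these three paths with the three cords yields a closed walk $a\to b\to\cdots\to b'\to c'\to\cdots\to c\to a'\to\cdots\to a$ in $\Gamma(\cL)$, where the three elided segments are $P_2$, $P_3$ and $P_1$ respectively. Because the leaf sets $L(v_1),L(v_2),L(v_3)$ of distinct children of $v$ are pairwise disjoint and each $P_s$ stays inside $L(v_s)$, no vertex of this walk is repeated; moreover $a\neq a'$ forces $P_1$ to be nontrivial, so the walk is a genuine cycle $C$ of length at least three. Every cycle of a graph is contained in a single block, and by hypothesis every block of $\Gamma(\cL)$ is a clique, so the vertex set of $C$ spans a clique in $\Gamma(\cL)$. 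In particular $ac'\in\cL$; but $ac'$ is a cord in $L(v_1)\times L(v_3)$ distinct from $a'c$ (since $a\neq a'$), contradicting the uniqueness part of Lemma~\ref{lem:size-A(v)}.

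With independence of $a^{ij}_i$ from $j$ secured, define $x_i:=a^{ij}_i$ for any $j\neq i$; this is well defined and by construction $x_sx_t\in\cL$ for all distinct $s,t\in\langle l\rangle$. Uniqueness of $x_i\in L(v_i)$ follows at once from the uniqueness clause of Lemma~\ref{lem:size-A(v)} applied to any single pair $(i,j)$. The step I expect to require the most care is verifying that the closed walk above is a genuine cycle rather than a more complicated closed walk; this hinges on the disjointness of the descendant leaf sets of distinct children of $v$ together with the choice of each $P_s$ as a simple path inside $\Gamma_{v_s}(\cL)$.
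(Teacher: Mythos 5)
Your proof is correct and follows essentially the same route as the paper: both reduce the claim to showing that the connector leaf in $L(v_i)$ is independent of the other child, and both derive a contradiction by building a cycle through three child subtrees via the connectivity of the $\Gamma_{v_s}(\cL)$ (Proposition~\ref{prop:gamma-l-connected}) and then invoking the block-graph property to force an extra cord that violates the uniqueness clause of Lemma~\ref{lem:size-A(v)}. Your bookkeeping with the pairs $a^{ij}_i$ neatly absorbs the paper's second case into the setup, but the argument is the same in substance.
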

\begin{proof}
For all $v\in \iV(T)$ and all $w\in ch(v)$, put
$$
L^v_w:=\{x\in L(w): \mbox{ there exist } w'\in ch(v)-\{w\}
\mbox{ and } y\in L(w')
\mbox{ such that } xy\in \cL  \}.
$$
We need to show that $|L^v_w|=1$ holds for all $v\in \iV(T)$ 
and all $w\in ch(v)$. To see this,
note first that since $G(\cL,v)$ is a clique for all $v\in \iV(T)$, 
we have, for all $w\in ch(v)$ with $v\in \iV(T)$, that $L^v_w\not=\emptyset$. 
Thus,  $|L^v_w|\geq 1$ holds for all such $v$ and $w$.

To establish equality, suppose there exists some interior vertex  
$v\in \iV(T)$ and 
some child $v_1\in ch(v)$ such that $|L^v_{v_1}|\geq 2$.
Choose two distinct leaves $x_1$ and $y_1$ of $T$ 
contained in $L^v_{v_1}$
and denote the parent edge of $v_1$ by $e_1$. Note that $v_1=v_{e_1}$.
Since $y_1\in L^v_{v_1}$, there exists a child edge $e_2$ of $v$
distinct from $e_1$ and some $x_2\in L(v_{e_2})$ such that
$y_1x_2\in\cL$. In view of $x_1\in L^v_{v_1}$, we 
distinguish between the cases that
(i) $x_1z\not\in\cL$ holds for all 
$z\in L(v_{e_2})$ and (ii) 
there exists some $z\in L(v_{e_2})$ such that
$x_1z\in\cL$.

Assume first that Case~(i) holds.
Then since $x_1\in L^v_{v_1}$ there exists 
a further child edge $e_3$ of $v$
and some $y_3\in L(v_{e_3})$ such that $x_1y_3\in\cL$.
Since, by Theorem~\ref{theo:characterization-topology}, 
$G(\cL,v)$ is a clique and so $\{e_2,e_3\}$ is an edge 
in $G(\cL,v)$, there must exist leaves 
$y_2\in L(v_{e_2})$ and 
$x_3\in L(v_{e_3})$ such that 
$y_2x_3\in\cL$. By 
Proposition~\ref{prop:gamma-l-connected}, the graphs 
$\Gamma_{v_{e_i}}(\cL)$, $i=2,3$,
are connected and, by definition, clearly do not share a
vertex. Hence, there must exist a cycle  in $\Gamma(\cL)$ whose vertex
set contains $\bigcup_{j\in\langle 3\rangle} \{x_j,y_j\}$. But then
$x_1x_2\in \cL$ must hold since
 $\Gamma(\cL)$
is a block graph and so every block in $\Gamma(\cL)$ 
is a clique. By Lemma~\ref{lem:size-A(v)} 
applied to $e_1$ and $e_2$, it follows that $x_1=y_1$
as $x_1,y_1\in L(v_1)$ and $y_1x_2\in \cL$ which
is impossible.

Now assume that Case~(ii) holds, that is, there exists some
$z\in L(v_{e_2})$ such that $x_1z\in\cL$. Then 
Lemma~\ref{lem:size-A(v)} applied to $e_1$ and $e_2$
 implies $x_1=y_1$ as $y_1x_2\in\cL$ also holds which is impossible.
 \qquad
\end{proof}

To illustrate Proposition~\ref{prop:x-i-unique}, 
let $T$ be the $X$-tree depicted
in Fig.~\ref{fig:block-graph-motivation}(ii)
and let $\cL$ be the set of cords of $X$ whose $\Gamma(\cL)$ graph is
pictured in  Fig.~\ref{fig:block-graph-motivation}(i). 
Using the notation from Proposition~\ref{prop:x-i-unique} and
 labelling the children of the root of $T$
from left to right by $v_1$, $v_2$ and $v_3$ 
it is easy to see that  Proposition~\ref{prop:x-i-unique} holds for
$x_1=a$, $x_2=c$ and $x_3=e$. 

The next result is the main result of this section and
lies at the heart of Corollary~\ref{cor:bijection}
which provides for an $X$-tree $T$ and a minimal topological lasso
$\cL$ for $T$ such that $\Gamma(\cL)$ is a block graph
a close link between
the blocks of $\Gamma(\cL)$, the interior vertices of $T$ and, for all 
$v\in \iV(T)$, the child-edge graphs $G(\cL,v)$. To establish it, we denote 
for all $v\in V(T)-\{\rho_T\}$ the parent edge of $v$ by $e_v$ and
the set of blocks of
a graph $G$ by $Block(G)$.

\begin{theorem}\label{theo:unique-block}
Suppose $T$ is an $X$-tree and $\cL$ is a minimal topological lasso for
$T$ such that $\Gamma(\cL)$ is a block graph. Then, for all $v\in \iV(T)$,
there exists a unique block $B\in Block(\Gamma(\cL))$ such that
$v=lca_T(V(B))$.
\end{theorem}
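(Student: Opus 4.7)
The plan is, for each $v\in\iV(T)$ with children $v_1,\dots,v_l$, to single out a canonical block, to check that its vertex set has last common ancestor $v$, and then to rule out any other block with the same property.

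I would start from the distinguished leaves $x_1,\dots,x_l$ supplied by Proposition~\ref{prop:x-i-unique}, where $x_i\in L(v_i)$ and $x_sx_t\in\cL$ for all distinct $s,t\in\langle l\rangle$. Since $v$ has at least two children, $\{x_1,\dots,x_l\}$ is a clique of size at least two in $\Gamma(\cL)$, and therefore (using that in a block graph every edge lies in a unique block) it is contained in a unique block $B$ of $\Gamma(\cL)$.

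The core step is to show that $V(B)=\{x_1,\dots,x_l\}$. First I would argue $V(B)\subseteq L(v)$: suppose for contradiction that $y\in V(B)$ with $y\notin L(v)$. Since $B$ is a clique, $yx_1,yx_2\in\cL$. Let $u$ be the strict ancestor of $v$ at which the paths from $\rho_T$ to $y$ and from $\rho_T$ to $v$ diverge; then $u=\lca_T(y,x_1)=\lca_T(y,x_2)$, and the two cords $yx_1$ and $yx_2$ both cross the same pair of child edges of $u$ (the one leading towards $v$ and the one leading towards $y$). This supplies two distinct pairs across those edges, contradicting Lemma~\ref{lem:size-A(v)} applied at $u$. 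Once $V(B)\subseteq L(v)$ is in hand, suppose some $y\in V(B)\cap L(v_i)$ satisfies $y\neq x_i$; picking any $j\neq i$, one has $yx_j,x_ix_j\in\cL$, which gives two distinct pairs in $L(v_i)\times L(v_j)$, again contradicting Lemma~\ref{lem:size-A(v)} applied at $v$. Hence $V(B)\cap L(v_i)=\{x_i\}$ for every $i$, so $V(B)=\{x_1,\dots,x_l\}$ and consequently $\lca_T(V(B))=v$.

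For uniqueness, suppose $B'\in\Block(\Gamma(\cL))$ also satisfies $\lca_T(V(B'))=v$. Then $V(B')$ must meet at least two distinct subtrees $L(v_i)$, $L(v_j)$ at some $y$ and $z$; since $yz\in\cL$, Lemma~\ref{lem:size-A(v)} forces $y=x_i$ and $z=x_j$, so $B$ and $B'$ share the two vertices $x_i,x_j$. But in a block graph any two distinct blocks intersect in at most one vertex, so $B=B'$. I expect the main obstacle to be the first half of the core step, namely pinning down the right ancestor $u$ at which to invoke Lemma~\ref{lem:size-A(v)} in order to exclude vertices of $B$ lying outside $L(v)$; once that is done, everything else follows from two further applications of Lemma~\ref{lem:size-A(v)} and the basic structural property of block graphs.
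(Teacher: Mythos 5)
Your proof is correct. The existence half follows the paper's own route almost exactly: you take the clique $\{x_1,\ldots,x_l\}$ supplied by Proposition~\ref{prop:x-i-unique}, locate the unique block $B$ containing it, and use Lemma~\ref{lem:size-A(v)} twice to show $V(B)$ can contain neither a vertex outside $L(v)$ (applying the lemma at the ancestor $u=lca_T(y,v)$, exactly as the paper does with its vertex $w$) nor a second vertex inside some $L(v_i)$; the only difference is the order of the two subcases. Where you genuinely depart from the paper is in the uniqueness half. The paper takes two arbitrary distinct blocks $B,B'$ with $lca_T(V(B))=v=lca_T(V(B'))$, splits into the cases $\{b_1,b_2\}\cap\{b_1',b_2'\}=\emptyset$ and $\neq\emptyset$, and in each case builds an explicit cycle in $\Gamma(\cL)$ which, via the block-graph property, forces $B=B'$ -- a rather long argument. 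You instead observe that any block $B'$ with $lca_T(V(B'))=v$ must meet two distinct child subtrees $L(v_i)$ and $L(v_j)$, and that the uniqueness clause of Lemma~\ref{lem:size-A(v)} then pins those two vertices down to $x_i$ and $x_j$; since $B'$ then shares two vertices with the canonical block $B$, the two blocks coincide. This is shorter and cleaner because you exploit the block you already built in the existence step as a reference point, rather than comparing two unknown blocks against each other; it buys a several-paragraph saving at no loss of rigour. The only implicit point worth making explicit is that $lca_T(V(B'))=v\in\iV(T)$ already forces $|V(B')|\geq 2$, so the vertices $y$ and $z$ in distinct child subtrees really do exist.
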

\begin{proof}
We first show existence. Suppose $v\in \iV(T)$. Let
$v_1,\ldots,v_l\in V(T)$ denote the children of $v$  where $l=|ch(v)|$.
By Proposition~\ref{prop:x-i-unique}, there exists, for 
all $i\in\langle l\rangle$, a unique
leaf $x_i\in L(v_i)$ such that, for all  $s,t\in \langle l\rangle$ distinct,
we have $x_sx_t\in \cL$. Put $A=\{x_1,\ldots,x_l\}$.
Clearly, $v=lca_T(A)$ and the graph $G(v)$ with vertex set $A$ and
edge set $E=\{\{x,y\}\in {A\choose 2}: xy\in\cL\}$ is a clique.
Then since $\Gamma(\cL)$ is a block graph
there must exist a block $B\in Block(\Gamma(\cL))$ that contains
$G(v)$ as an induced subgraph.

We claim that the graphs $G(v)$ and $B$  are equal. 
In view of the facts that $A\subseteq V(B)$, the
 blocks in a bock graph are cliques, and $G(v)$
is a clique it suffices to show that
$V(B)\subseteq A$. Suppose for contradiction that
there exists some $y\in V(B)-A$. Note first that $yx\in \cL$
must hold for all $x\in A$.
Next note that $y$ cannot be a descendant of $v$
since otherwise there would exist some $i\in\langle l\rangle$ such 
that $y\in L(v_i)$. Choose some  $j\in \langle l\rangle-\{i\}$. Then
Lemma~\ref{lem:size-A(v)} applied to $e_{v_i}$ and $e_{v_j}$
 implies $x_i=y$  as $yx_j,x_ix_j\in\cL$
 which is impossible.

Choose some $x\in A$ and put 
$w=lca_T(x,y)$. Then $v$ is a descendant of $w$ and $w=lca_T(x,y)$ 
holds for all $x\in A$. Let $w_1\in V(T)$ and $w_2\in \iV(T)$ 
denote two distinct children of $w$ such that $y\in L(w_1)$ 
and $x\in L(w_2)$. Then Lemma~\ref{lem:size-A(v)} applied 
to $e_{w_1}$ and $e_{w_2}$
implies $x_i=x_j$ for all $i,j\in\langle l\rangle$ distinct
since $yx\in\cL$ holds
for all $x\in A$ which is impossible.
Thus, $V(B)\subseteq A$, as required. This concludes the proof of
the existence part of the theorem.

We next show uniqueness. 
Suppose for contradiction that there exists
some $v\in \iV(T)$ and distinct blocks $B,B'\in Block(\Gamma(\cL))$
such that  $lca_T(B)=v=lca_T(B')$. 
Since every block of $\Gamma(\cL)$ contains at least two vertices
as $\Gamma(\cL)$ is connected and $|X|\geq 3$,
we may choose distinct vertices  $b_1,b_2\in V(B)$ and 
$b_1',b_2'\in V(B')$ such that 
$lca_T(b_1,b_2)=lca_T(B)=v=lca_T(B')=lca_T(b_1',b_2')$. 
Note that $b_1b_2$ and  $b_1'b_2' $ must be cords in $\cL$
as $B$ and $B'$ are cliques of $\Gamma(\cL)$. 
We distinguish between the
cases that (i) $\{b_1,b_2\}\cap\{b_1',b_2'\}=\emptyset $
and (ii) $\{b_1,b_2\}\cap\{b_1',b_2'\}\not=\emptyset $.

We first show that Case (i) cannot hold. Assume for contradiction that
Case (i) holds, that is, $\{b_1,b_2\}\cap\{b_1',b_2'\}=\emptyset $.
We claim that 
$lca_T(b_1,b_1')=v$. Assume for contradiction that 
$w:=lca_T(b_1,b_1')\not=v$. Let $v_1\in ch(v)$ 
such that $v_1$ lies on the path from $v$ to
$w$. If $v\not =lca_T(b_2,b_2')$
then there exists a descendant $w'\in V(T)$ of $v$ such
that $lca_T(b_2,b_2')=w'$. Let $v_2\in ch(v)$ 
such that $v_2$ that lies on the path from $v$ to $w'$. Then
Lemma~\ref{lem:size-A(v)} applied to $e_{v_1}$ and $e_{v_2}$
implies $b_1=b_1'$ and $b_2=b_2'$ as $b_1b_2,b_1'b_2'\in \cL$
which is impossible. Thus, $lca_T(b_2,b_2')=v$
must hold. Let $v_2,v_2'\in ch(v)$ such that $b_2\in L(v_2)$
and $b_2'\in L(v_2')$. 
Then since $b_1,b_1'\in L(v_1)$ and $b_1b_2, b_1'b_2'\in \cL$, 
Proposition~\ref{prop:x-i-unique}
implies $b_1'=b_1$.
Consequently,
$\{b_1,b_2\}\cap\{b_1',b_2'\}\not=\emptyset $ which is impossible.
Thus, $lca_T(b_2,b_2')=v$
cannot hold and so 
$$
lca_T(b_1,b_1')=v,
$$ 
as claimed.
Swapping the roles of $b_1,b_1'$ and $b_2,b_2'$ in the previous
claim implies that 
 $v= lca_T(b_2,b_2')$ must hold, too. 
 For $i=1,2$ let $v_i,v_i'\in ch(v)$ 
such that $b_i\in L(v_i)$ and $b_i'\in L(v_i')$.
Then, by Lemma~\ref{lem:size-A(v)}, 
there exist pairs $(c,c')\in L(v_1)\times L(v_1')$ and 
$(d,d')\in L(v_2)\times L(v_2')$ such that $cc',dd'\in\cL$. 
Since $(b_1,b_2)\in L(v_1)\times L(v_2)$ and 
$(b_1',b_2')\in L(v_1')\times L(v_2')$ and $b_1b_2,b_1'b_2'\in\cL$,
 Proposition~\ref{prop:x-i-unique} implies 
that $c=b_1$, $b_2=d$, $d'=b_2'$ and $c'=b_1'$. But then
$C$: $c'=b_1',b_2'=d', d=b_2, b_1=c,c'$ is a cycle in $\Gamma(\cL)$.
Since $\Gamma(\cL)$ is a block graph it follows that there must exist
a block $B^C$ in $\Gamma(\cL)$ that contains $C$. 
Since $\{b_1,b_2\}\subseteq V(B^C)\cap V(B)$ and two distinct blocks of
a block graph can share at most one vertex it follows that $B^C$ and 
$B$ must coincide. Since $\{b_1',b_2'\}\subseteq V(B^C)\cap V(B')$
holds too,
similar arguments imply that $B^C$ must also coincide with $B'$.
Thus, $B$ and $B'$ must be equal which is impossible.
Hence Case~(i) cannot hold, as required.

Thus, Case (ii) must hold, that is,  
$\{b_1,b_2\}\cap\{b_1',b_2'\}\not=\emptyset $. Since any 
two distinct blocks in a block graph can share at most one vertex 
it follows that $|\{b_1,b_2\}\cap\{b_1',b_2'\}|=1$.
Without loss of generality we may assume that $b_1=b_1'$. 
We first claim that 
$$
lca_T(b_2,b_2')=v.
$$
Assume to the contrary that $lca_T(b_2,b_2')\not=v$. 
Then there exist distinct children $v_1,v_2 \in ch(v)$ 
such that $b_1\in L(v_1)$ and $b_2,b_2'\in L(v_2)$ hold. 
Since both $b_1b_2$ and
$b_1'b_2'=b_1b_2'$ are cords in $\cL$, Lemma~\ref{lem:size-A(v)}
applied to 
$e_{v_1}$ and $e_{v_2}$
implies $b_2'=b_2$. Hence, $|\{b_1,b_2\}\cap\{b_1',b_2'\}|=2$ which 
is impossible. Thus, $lca_T(b_2,b_2')=v$, as claimed.

Let $v_1,v_2,v_2'\in ch(v)$  such that
$b_1\in L(v_1)$, $b_2\in L(v_2)$, and $b_2'\in L(v_2')$. 
By Lemma~\ref{lem:size-A(v)}, there exist some
$(c,c')\in L(v_2)\times L(v_2')$ such that $cc'\in \cL$.
Since we also have $(b_1,b_2)\in L(v_1)\times L(v_2)$ with
$b_1b_2\in \cL$ holding and $(b_1,b_2')\in L(v_1)\times L(v_2')$ with
$b_2'b_1=b_2'b_1'\in \cL$ holding, Proposition~\ref{prop:x-i-unique}
implies that $b_2=c$ and $b_2'=c'$. Hence, $C$: $b_1=b_1',b_2'=c',
c=b_2,b_1$ is a cylce in $\Gamma(\cL)$ and so similar arguments
as in the corresponding subcase for Case (i) imply that 
$B$ and $B'$ must coincide which is impossible. Thus, 
$lca_T(b_2,b_2')=v$ cannot hold  which concludes the discussion
of Case (ii) and thus the proof of the uniqueness part of the theorem.
\qquad \end{proof}

In view of Theorem~\ref{theo:unique-block}, we denote for $T$ an $X$-tree,
 a minimal topological lasso $\cL$ for $T$ such that $\Gamma(\cL)$
is a block graph, and a vertex $v\in \iV(T)$ 
the unique block $B$ in $\Gamma(\cL)$ for which 
$v=lca_T(V(B))$ holds by 
$B_v^{\cL}$, or simply by
$B_v$ if the set $\cL$ of cords is clear from the context. 
Moreover, we denote for all 
$x\in L(v)$ the child of $v$ on the path from  $v$ to $x$ by $v_x$. 

\begin{corollary}\label{cor:bijection}
Suppose $T$ is an $X$-tree and $\cL$ is a minimal topological lasso for
$T$ such that $\Gamma(\cL)$ is a block graph. Then the map
$$
\psi:\iV(T)\to Block(\Gamma(\cL)) : v\mapsto B_v
$$
is a bijection with inverse map $\psi^{-1}:Block(\Gamma(\cL))\to \iV(T)$: 
$B\mapsto lca_T(V(B))$. Moreover, the map
$$
\chi:Block(\Gamma(\cL)) \to \{G(\cL, v)\,:\, v\in \iV(T)\}
: B\mapsto G(\cL,\psi^{-1}(B))
$$
is bijective and,
 for all $B\in Block(\Gamma(\cL))$, the map
$$
\xi_B:V(B) \to V_{\psi^{-1}(B)} 
: x\mapsto e_{\psi^{-1}(B)_x}
$$ 
induces a graph isomorphism between $B$ and the child-edge
graph $G(\cL, \psi^{-1}(B))$.
\end{corollary}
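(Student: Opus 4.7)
The plan is to read all three bijections off Theorem~\ref{theo:unique-block} together with Proposition~\ref{prop:x-i-unique}, using Theorem~\ref{theo:characterization-topology} to upgrade each resulting vertex bijection to a clique isomorphism.

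\textbf{Bijectivity of $\psi$.} Well-definedness of $\psi$ is the existence half of Theorem~\ref{theo:unique-block}, while injectivity is its uniqueness half: if $\psi(v)=\psi(v')=B$, then $v=lca_T(V(B))=v'$. For surjectivity, fix $B\in \Block(\Gamma(\cL))$ and set $v:=lca_T(V(B))$. By Proposition~\ref{prop:gamma-l-connected} the graph $\Gamma(\cL)$ is connected, and since $|X|\geq 3$, every block of $\Gamma(\cL)$ must contain at least two vertices; hence $v\in \iV(T)$. The uniqueness part of Theorem~\ref{theo:unique-block} applied to $v$ then forces $B=B_v=\psi(v)$. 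The same computation $v=lca_T(V(B_v))$ identifies $B\mapsto lca_T(V(B))$ as the inverse of $\psi$.

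\textbf{Bijectivity of $\chi$.} Since $\psi^{-1}$ is already a bijection by the previous step, it suffices to show that the map $\iV(T)\to\{G(\cL,v):v\in \iV(T)\}$, $v\mapsto G(\cL,v)$, is a bijection. Surjectivity holds by construction. For injectivity, observe that for distinct $v,v'\in \iV(T)$ the vertex sets $V_v$ and $V_{v'}$ are disjoint, since each edge of $T$ is a child edge of exactly one vertex, and both are non-empty because interior vertices have at least two children by our blanket assumptions on $T$; hence $G(\cL,v)\neq G(\cL,v')$.

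\textbf{The isomorphism $\xi_B$.} Fix $B\in \Block(\Gamma(\cL))$, set $v:=\psi^{-1}(B)$, and let $v_1,\dots,v_l$ denote the children of $v$. The proof of Theorem~\ref{theo:unique-block} has established $V(B)=\{x_1,\ldots,x_l\}$, where $x_i\in L(v_i)$ is the unique leaf supplied by Proposition~\ref{prop:x-i-unique}. Consequently $v_{x_i}=v_i$ and $\xi_B(x_i)=e_{v_i}$; as $e_{v_1},\dots,e_{v_l}$ enumerate $V_v$ without repetition, $\xi_B$ is a bijection. Finally, $B$ is a clique on $l$ vertices since $\Gamma(\cL)$ is a block graph, and $G(\cL,v)$ is a clique on $l$ vertices by Theorem~\ref{theo:characterization-topology}, so any bijection between their vertex sets is automatically a graph isomorphism. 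I expect no genuine obstacle here: the real content has been absorbed into Theorem~\ref{theo:unique-block} and Proposition~\ref{prop:x-i-unique}, and the corollary is essentially bookkeeping.
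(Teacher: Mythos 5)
Your proof is correct and follows essentially the same route as the paper: $\psi$'s bijectivity from the existence/uniqueness halves of Theorem~\ref{theo:unique-block}, and $\xi_B$ as a bijection between two $l$-cliques using the description $V(B_v)=\{x_1,\dots,x_l\}$ extracted from that theorem's proof via Proposition~\ref{prop:x-i-unique}. The only (harmless) deviation is your injectivity argument for $\chi$ via the disjointness of the vertex sets $V_v$, which is in fact slightly more direct than the paper's combination of the $\xi_B$-isomorphisms with a cardinality count.
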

\begin{proof}
In view of Theorem~\ref{theo:unique-block}, the map $\psi$ is clearly
well-defined and injective. To see that $\psi$ is surjective
let $B\in Block(\Gamma(\cL))$ and put $v_B=lca_T(V(B))$. Clearly,
$v_B\in \iV(T)$. Since 
$B_{v_B}=\psi(v_B)$ is a block in $\Gamma(\cL)$ for which also
$v_B=lca_T(V(B_{v_B}))$
 holds, Theorem~\ref{theo:unique-block} 
 implies that $\psi(v_B)$ and $B$ must coincide.
Consequently, $\psi$ must also be surjective and thus bijective. That
the map $\psi^{-1}$ is as stated is trivial.
Combined with Theorem~\ref{theo:characterization-topology}, 
the bijectivity of the map $\psi$ implies in particular that, 
for all $B\in Block(\Gamma(\cL))$,
the map $\xi_B: V(B) \to V_{\psi^{-1}(B)}  $ 
from $V(B)$ to the vertex set $V_{\psi^{-1}(B)}  $
of the child-edge graph $G(\cL, \psi^{-1}(B))$ induces a graph
isomorphism between $B$ and $G(\cL, \psi^{-1}(B))$.

To see that the map $\chi$ is bijective 
note first that $\chi$ is well-defined
since $\psi^{-1}(B)\in \iV(T)$ holds 
for all blocks $B\in Block(\Gamma(\cL))$. To see that $\chi$ is
injective assume that there exist blocks $B_1,B_2\in Block(\Gamma(\cL))$
such that $\chi(B_1)=\chi(B_2)$ but $B_1$ and $B_2$ are distinct.
Then $\psi^{-1}(B_1)\not= \psi^{-1}(B_2)$ as $\psi$ is a bijection from
$\iV(T)$ to  $Block(\Gamma(\cL)) $.
Combined with the fact that, for all
$B\in Block(\Gamma(\cL))$, the map $\xi_B$  induces a graph
isomorphism between $B$ and $G(\cL, \psi^{-1}(B))$
it follows that 
$\chi(B_1)=G(\cL,\psi^{-1}(B_1))\not=G(\cL,\psi^{-1}(B_2))=\chi(B_2)$ 
which is impossible. Thus, $\chi$ must be injective. Combined with the
fact that
$|Blocks(\Gamma(\cL))|=|\iV(T)|=|\{G(\cL, v)\,:\, v\in \iV(T)\}|$
it follows that $\chi$ must also be surjective and thus bijective.
\qquad \end{proof}

\section{A special type of minimal 
topological lasso} \label{sec:distinguished}

Returning to the example depicted in Fig.~\ref{fig:block-graph-motivation},
it should be noted that, in addition to being a block graph,
 $\Gamma(\cL)$ enjoys a very special
property where $\cL$ is the minimal topological lasso considered in
that example. More precisely, every vertex of $\Gamma(\cL)$
is contained in at most two blocks. 
Put differently, $\Gamma(\cL)$ is a claw-free graph. Motivated by this, we
call a minimal topological lasso $\cL$ {\em distinguished} if
$\Gamma(\cL)$ is a claw-free block graph.  Note that such
block graphs are precisely the
{\em line graphs of (unrooted) trees} where for any graph $G$ the
associated line graph has vertex set $E(G)$ and two vertices
$a,b\in E(G)$ are joined by an edge if $a\cap b\not=\emptyset$ \cite{H72}.  

In this section, we show in Theorem~\ref{theo:transform}
that distinguished minimal topological 
lassos are a very special type of lasso in that for
every $X$-tree $T$ any minimal topological lasso $\cL$ for $T$
can be transformed into a distinguished
minimal topological lasso 
$\cL^*$ for $T$ via a {\em repeated
application} (i.\,e.\,$l\geq 0$ applications) of the rule: 

\begin{enumerate}
\item[(R)] If $xy,yz\in \cL$ and $lca_T(y,z)$ is a descendant of  
$lca_T(x,y)$ in $T$ then delete $xy$ from the edge set of $\Gamma(\cL)$
and add the edge $xz$ to it. 
\end{enumerate}

Before we make this more precise which we will do next, we remark 
that since a topological lasso for a star tree
is in particular a distinguished minimal topological
lasso for it,  we will for this and the next two sections 
restrict our attention to {\em non-degenerate} $X$-trees,
that is, $X$-trees that are not star trees on $X$.

Suppose $T$ is a non-degenerate $X$-tree and $\cL$
is a set of cords of $X$. Let $\iV(T)$ denote
a set of colors and let 
$$
\gamma_{(\cL,T)}:\cL\to \iV(T):\, ab\mapsto lca_T(a,b)
$$
denote an edge coloring of $\Gamma(\cL)$
in terms of the interior vertices of $T$. Note that
if $\cL$ is a topological lasso for $T$
then Theorem~\ref{theo:characterization-topology} implies that
$\gamma_{(\cL,T)}$ is surjective. Returning to Rule (R),
note that a repeated application of that rule to such a set $\cL$ of cords
results in a  set $\cL'$ of cords that is also a 
topological lasso for $T$. Furthermore, note that if $\cL$ is a minimal
topological lasso for $T$ then $\cL'$ is necessarily also a minimal 
topological lasso for $T$. Finally note for all $v\in \iV(T)$ that
$|\gamma_{(\cL,T)}^{-1}(v)|=1$ or  
$|\gamma_{(\cL,T)}^{-1}(v)|\geq 3$ must hold in this case.

\begin{lemma}\label{lem:3-edges-cycle}
Suppose $T$ is a non-degenerate $X$-tree and $\cL$ is a minimal
topological lasso for $T$.  Put $\gamma=\gamma_{(\cL,T)}$ and assume that
$v\in \iV(T)$ such that $|\gamma^{-1}(v)|\geq 3 $. Then for any three
pairwise distinct cords $c_1,c_2,c_3\in \gamma^{-1}(v)$, 
there exists a 
cycle $C_v$ in $\Gamma(\cL)$ such that $c_1,c_2,c_3\in E(C_v)$ and,  
for all $c\in E(C_v)$,
$\gamma(c)$ either equals $v$ or is a descendant of $v$. 
\end{lemma}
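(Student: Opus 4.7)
The plan is to stitch together the three cords $c_1, c_2, c_3$ into a cycle by interleaving them with paths inside the subgraphs $\Gamma_w(\cL)$ for children $w$ of $v$ (each of which is connected by Proposition~\ref{prop:gamma-l-connected}) and, where necessary, further color-$v$ cords between $L(w)$ and $L(w')$ for distinct children $w, w' \in ch(v)$ (each such cord exists and is unique by Lemma~\ref{lem:size-A(v)}).

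As setup, Lemma~\ref{lem:size-A(v)} gives $|\gamma^{-1}(v)| = \binom{|ch(v)|}{2}$, so from $|\gamma^{-1}(v)| \geq 3$ we have $l := |ch(v)| \geq 3$. I would write each $c_i = \{a_i, b_i\}$ with $a_i \in L(u_i)$ and $b_i \in L(u_i')$ for distinct children $u_i, u_i' \in ch(v)$, and form the auxiliary multigraph $H$ on $ch(v)$ with edges $\{u_i, u_i'\}$ for $i \in \langle 3 \rangle$. A small but useful observation: any edge of $\Gamma_w(\cL)$ is a cord $\{p, q\}$ with both endpoints in $L(w)$, so $\gamma(\{p, q\}) = lca_T(p, q)$ is either $w$ or a proper descendant of $w$, and hence in any case a descendant of $v$; similarly, every color-$v$ cord trivially meets the colour requirement. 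Thus any cycle built out of such edges automatically satisfies the colour condition demanded in the conclusion, and the work reduces to exhibiting the cycle in $\Gamma(\cL)$.

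The argument then splits on the shape of $H$. If $H$ is a triangle, a path, or a matching (and more generally in any configuration other than a star), I would supplement $c_1, c_2, c_3$ with at most three further cords from $\gamma^{-1}(v)$ (all available by Lemma~\ref{lem:size-A(v)}) so that the selected color-$v$ cords form a closed sequence of child-transitions in which every visited child appears with the same number of entries as exits; then for each visited child $w$, I would splice in a path in $\Gamma_w(\cL)$ between the corresponding entering and leaving leaves, using Proposition~\ref{prop:gamma-l-connected}. Since distinct children have disjoint leaf sets, and the auxiliary cords between distinct children are forced to be distinct by Lemma~\ref{lem:size-A(v)}, this yields a simple cycle $C_v$ in $\Gamma(\cL)$ through $c_1, c_2, c_3$.

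The most delicate case, and the one I expect to be the main obstacle, is the star configuration, where all three cords share a common child $u_* \in ch(v)$. Since by Lemma~\ref{lem:size-A(v)} the only color-$v$ cords incident to $u_*$ are $c_1, c_2, c_3$ themselves, the parity at $u_*$ cannot be corrected by adding further color-$v$ cords; instead I would exploit the connectedness of $\Gamma_{u_*}(\cL)$ (Proposition~\ref{prop:gamma-l-connected}) to traverse between the endpoints $a_1, a_2, a_3 \in L(u_*)$, and close the walk through the remaining children via extra color-$v$ cords between them (available when $l \geq 4$). Verifying that the resulting construction can always be realized as a \emph{simple} cycle—in particular, showing that the relevant leaves and paths in $\Gamma_{u_*}(\cL)$ and in the other $\Gamma_w(\cL)$ can be routed to avoid repeated vertices—will be the crux of the proof.
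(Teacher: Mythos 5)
Your overall strategy is the same as the paper's: supplement $c_1,c_2,c_3$ with further colour-$v$ cords supplied by Lemma~\ref{lem:size-A(v)} so that the colour-$v$ edges used form a closed sequence of transitions between children of $v$, splice in paths inside the connected subgraphs $\Gamma_w(\cL)$ via Proposition~\ref{prop:gamma-l-connected}, and note that every edge so used is automatically coloured $v$ or a descendant of $v$. The paper simply writes down one such completion (connecting cords chosen in $L(v_1)\times L(v_2)$, $L(w_2)\times L(w_3)$ and $L(w_1)\times L(v_3)$, closing a hexagonal walk on the children), tacitly assuming that the children involved are suitably distinct; your explicit case analysis on the auxiliary multigraph $H$ is the more careful account, and your treatment of the non-star configurations is correct.

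However, the star configuration, which you flag as ``the crux'' but do not resolve, is a genuine gap in your proposal --- and in fact it cannot be closed, because there the statement itself fails. Take $X=\{a,b,c,d,e\}$ and let $T$ have root $\rho_T$ with children $a$, $b$, $c$ and the parent $u$ of the cherry $\{d,e\}$. Then $\cL=\{de,ab,ac,bc,ad,bd,cd\}$ is a minimal topological lasso for $T$ (each child-edge graph is a clique of the size required by Lemma~\ref{lem:size-A(v)}), and $c_1=ab$, $c_2=ac$, $c_3=ad$ are three pairwise distinct cords in $\gamma^{-1}(\rho_T)$ all incident with the vertex $a$, which has degree three in $\Gamma(\cL)$; no cycle can contain three edges meeting at one vertex. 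Your parity worry at $u_*$ is exactly the right diagnosis in general: the only edges of $\Gamma(\cL)$ coloured $v$ or a descendant of $v$ that cross the cut separating $L(u_*)$ from $X\setminus L(u_*)$ are the $|ch(v)|-1$ colour-$v$ cords leaving $L(u_*)$, and a cycle meets every edge cut an even number of times, so for $|ch(v)|=4$ it can use at most two of $c_1,c_2,c_3$; when $u_*$ is a leaf the degree obstruction above persists for every $|ch(v)|$. Note that the paper's own proof also breaks precisely here, since the connecting cord it requests in $L(v_1)\times L(v_2)$ does not exist when $v_1=v_2$; the lemma needs an extra hypothesis excluding the star configuration (or the conclusion must be weakened), and no amount of routing inside $\Gamma_{u_*}(\cL)$ will repair it.
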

\begin{proof}
Let $v\in \iV(T)$ and let $c_1=x_1y_1$, $c_2=x_2y_2$ and
$c_3=x_3y_3$ denote three pairwise distinct cords in $\gamma^{-1}(v)$. 
For all $i\in\langle 3\rangle$,
let $v_i\in ch(v) $ such that $v_i$ lies on the path from $v$ to $x_i$ in $T$
and let $w_i\in ch(v)$ such that $w_i$ lies on the path from $v$ to
 $y_i$ in $T$.
Then, by Lemma~\ref{lem:size-A(v)},  there exists unique pairs
$(s_1,t_1)\in L(v_1)\times L(v_2)$, 
$(s_2,t_2)\in L(w_2)\times L(w_3)$,
and $(s_3,t_3)\in L(w_1)\times L(v_3)$
such that, for all $i\in\langle 3\rangle$, we have $s_it_i\in \cL$.
Since for all such $i$, we also have that $x_i\in L(v_i)$
and $y_i\in L(w_i)$ and, by Proposition~\ref{prop:gamma-l-connected},
the graphs $\Gamma_{v_i}(\cL)$ and $\Gamma_{w_i}(\cL)$
are connected, it follows that there exists a 
cycle $C_v$ in $\Gamma(\cL)$ that contains, for all $i\in\langle 3\rangle$,
the cords $c_i$ and $s_it_i$ in its edge set.

It remains to show that for every edge $c\in E(C_v)$, we have that
$\gamma(c)$ either equals $v$ or is a descendant of $v$. 
Suppose $c\in E(C_v)$.
If there exists some $i\in\langle 3\rangle$ such that $c\in\{c_i,s_it_i\}$
then $\gamma (c)=v$ clearly holds. So assume that
this is not the case. Without loss of generality, we may assume
that $c$ lies on the path $P$ from $x_1$
to $s_1$ in $C_v$ that does not cross $y_1$. Since
$P$ is a subgraph of $\Gamma_{v_1}(\cL)$ and, implied by
Proposition~\ref{prop:gamma-l-connected},
every edge in $\Gamma_{v_1}(\cL)$ is colored via $\gamma$
with a descendant of $v_1$, it follows that
$\gamma(c)$ is a descendant of $v$.
\qquad
\end{proof}

To establish Theorem~\ref{theo:transform}, we require further terminology.
Suppose $T$ is a non-degenerate $X$-tree, $\cL$ is a minimal
topological lasso for $T$, and $v\in \iV(T)$.
Then we denote by $H_{\cL}(v)$ the 
induced
subgraph of $\Gamma(\cL)$ whose vertex set is the
set of all $x\in X$ that are incident with some
cord $c\in \cL$ for which $\gamma_{(\cL,T)}(c)=v$ holds. 
Moreover, 
 we denote the set of cut vertices of a connected block graph
 $G$ by $Cut(G)$ 
and note that in every connected block graph $G$ there must exist a 
vertex that is contained in at most
one block of $G$. This last observation is
central to the proof of Theorem~\ref{theo:transform}(ii).

\begin{theorem}\label{theo:transform}
Suppose $T$ is a non-degenerate $X$-tree and 
$\cL$ is a minimal topological lasso 
for $T$. Then there exists an ordering $\sigma:
v_0, v_1,\ldots, v_k=\rho_T$, $k=|\iV(T)|$,
of $\iV(T)$ such that the following holds:
\begin{enumerate}
\item[(i)]  There exists a sequence $
\cL_{v_0}=\cL,\cL_{v_1},\ldots, \cL^{\dagger}=\cL_{v_{k}}$ 
of minimal topological lassos $\cL_{v_i}$ for $T$, $i\in \langle k\rangle$, 
such that for all such $i$, we have:
\begin{enumerate}
\item[(L1)] 
$\cL_{v_i}$ is obtained from $\cL_{v_{i-1}}$ via a repeated application 
of Rule (R)
and $H_{\cL_{v_i}}(v_i)$ is a maximal clique in $\Gamma(\cL_{v_i})$.
\item[(L2)] For all $j\in\langle i-1\rangle$, 
$H_{\cL_{v_{i}}}(v_j)$ is a maximal clique in $\Gamma(\cL_{v_{i}})$.
\end{enumerate}
In particular, 
$\Gamma(\cL^{\dagger})$ is a block graph.
\item[(ii)]  If 
$\Gamma(\cL)$ is a block graph then  there exists a sequence
$\cL_{v_0}=\cL,\cL_{v_1},\ldots, \cL^*=\cL_{v_{k}}$ 
of minimal topological lassos $\cL_{v_i}$ for $T$, $i\in\langle k\rangle$, 
such that for all such $i$, we have:
\begin{enumerate}
\item[(L1')] 
$\cL_{v_i}$ is obtained from $\cL_{v_{i-1}}$ via a repeated 
application of Rule (R)
and $\Gamma(\cL_{v_i})$ is a block graph.
\item[(L2')]
$\Gamma_{v_i}(\cL_{v_i})$ is a claw-free block graph. 
\end{enumerate}
In particular, $\cL^*$ is a distinguished minimal topological lasso for $T$.
\end{enumerate}
\end{theorem}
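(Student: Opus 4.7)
The plan is to process the interior vertices of $T$ in a bottom-up ordering $v_0,v_1,\ldots,v_k=\rho_T$ in which all interior descendants of every $v_i$ come earlier, starting from $\cL_{v_0}=\cL$. At step $i$ I would transform $\cL_{v_{i-1}}$ into $\cL_{v_i}$ by repeated applications of Rule~(R) that affect only cords in $\gamma_{(\cL_{v_{i-1}},T)}^{-1}(v_i)$. The aim is, for each child $w_j$ of $v_i$, to slide every endpoint of a cord in $\gamma^{-1}(v_i)$ that lies in $L(w_j)$ to a single chosen representative $x_j\in L(w_j)$, so that $H_{\cL_{v_i}}(v_i)=\{x_1,\ldots,x_l\}$ becomes a clique.

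For part~(i), I would take any $x_j\in L(w_j)$. Given a current cord $zy\in\gamma^{-1}(v_i)$ with $y\in L(w_j)$ and $z\notin L(w_j)$, Proposition~\ref{prop:gamma-l-connected} supplies a path $y=p_0,p_1,\ldots,p_t=x_j$ in $\Gamma_{w_j}(\cL_{v_{i-1}})$, and applying Rule~(R) successively to the pairs of cords $zp_s$ and $p_sp_{s+1}$ (whose hypothesis is automatic since $\lca(p_s,p_{s+1})$ is a descendant of $w_j$, hence strictly below $v_i=\lca(z,p_s)$) rewires $zy$ into $zx_j$ while leaving all cords inside $L(w_j)$ untouched. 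Because step~$i$ only alters cords in $\gamma^{-1}(v_i)$, the cliques $H_{\cL_{v_i}}(v_j)$ from earlier steps survive, giving (L1) and (L2). Maximality of $\{x_1,\ldots,x_l\}$ is a short case analysis via Lemma~\ref{lem:size-A(v)}: an external vertex $y$ joined to all $x_j$ is ruled out either by the uniqueness of cords between two child subtrees of $v_i$ (if $y\in L(v_i)$) or by the same uniqueness at the common ancestor of $v_i$ and $y$ (if $y\notin L(v_i)$). Finally, $\Gamma(\cL^{\dagger})$ is a block graph because (a) the maximal cliques $H_{\cL^{\dagger}}(v)$ pairwise share at most one vertex---each ancestor contributes only one representative to each of its descendant subtrees---and (b) $\Gamma(\cL^{\dagger})$ is chordal: an induced cycle of length at least four would, after entering the subtree of some child of its highest-$\gamma$ vertex $v^*$, be forced to come back out through that subtree's unique representative, repeating a vertex.

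For part~(ii), the hypothesis that $\Gamma(\cL)$ is a block graph lets me refine the choice of $x_j$. By Corollary~\ref{cor:bijection} and the induction on the bottom-up ordering, $\Gamma_{w_j}(\cL_{v_{i-1}})$ is a claw-free block graph for every interior child $w_j$ of $v_i$, and the remark preceding the theorem supplies a vertex lying in at most one block of it, which I take as $x_j$ (with $x_j=w_j$ when $w_j$ is a leaf of $T$). After the sliding, each $x_j$ lies in at most two blocks of $\Gamma_{v_i}(\cL_{v_i})$---the new block $B_{v_i}=\{x_1,\ldots,x_l\}$ and possibly one inherited block---so the standard fact that in a block graph claw-freeness is equivalent to every vertex being in at most two blocks delivers (L2'). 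Property (L1') is preserved since blocks outside $L(v_i)$ are untouched and $B_{v_i}$ meets each surviving block in at most one vertex. The main obstacle throughout is bookkeeping: one has to verify that the step-$i$ sliding never disturbs cords in $\gamma^{-1}(u)$ for ancestors $u\ne v_i$---where the blocks above $v_i$ live---and that each such ancestor block intersects $L(v_i)$ in just a single leaf; both rest on the locality of Rule~(R) and on Proposition~\ref{prop:x-i-unique} respectively.
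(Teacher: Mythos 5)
Your proposal is correct in substance and follows the paper's overall strategy: process $\iV(T)$ bottom-up and rewire cords via Rule~(R) along paths supplied by the connectivity of the subgraphs $\Gamma_w(\cL)$ (Proposition~\ref{prop:gamma-l-connected}), choosing in part~(ii) a non-cut vertex of $\Gamma_{w}(\cL_{v_{i-1}})$ as the new attachment point. Where you differ is in the engine for part~(i): the paper repeatedly repairs triples of $v_i$-coloured cords that fail to form a $3$-clique, invoking Lemma~\ref{lem:3-edges-cycle} to find a cycle through them, whereas you normalize \emph{all} cords in $\gamma^{-1}(v_i)$ at once to a single representative per child subtree, which by Lemma~\ref{lem:size-A(v)} (exactly one cord per pair of child subtrees) immediately yields the clique. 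Your version is more explicit about which Rule~(R) applications are performed and sidesteps Lemma~\ref{lem:3-edges-cycle} entirely; it also makes the block-graph conclusion for $\cL^{\dagger}$ cleaner, since after normalization every cord coloured $v$ has both endpoints among $v$'s representatives, from which your chordality and pairwise-intersection arguments go through. One step is stated too weakly to stand on its own: for (L1') you justify that $\Gamma(\cL_{v_i})$ remains a block graph by saying the new clique "meets each surviving block in at most one vertex", but pairwise intersections of size at most one do not by themselves preclude a non-clique $2$-connected subgraph assembled from several cliques. The repair is easy and implicit in your construction: in the block graph $\Gamma(\cL_{v_{i-1}})$ the deletion of the edges of the old block $B_{v_i}^{i-1}$ separates its vertices into distinct components, each $\Gamma_{w_j}$ (hence each new representative $y_j$) lies in the component of the old representative $x_j$, so re-attaching the new clique at the single vertices $y_1,\ldots,y_l$ cannot create a cycle leaving the new clique. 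With that observation added, both parts of your argument are complete.
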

\begin{proof}
For all $i\in \langle k\rangle$, put $\cL_i=\cL_{v_{i}}$
and $\gamma_i=\gamma_{(\cL_i,T)}$.  Clearly, 
if $\cL$ is distinguished then the sequences as described in (i) and (ii) 
exist. So assume that $\cL$ is not distinguished.
For all $v\in \iV(T)$, let $l(v)$ denote the length of the path
from the root $\rho_T$ of $T$ to $v$ and put $h=\max_{v\in \iV(T)}\{l(v)\}$.
Note that $h\geq 1$ as $T$ is non-degenerate.
For all $i\in \langle h\rangle$, let
 $V(i)\subseteq \iV(T)$ denote the set of all interior vertices $v$
of $T$ such that $l(v)=i$.
Let $\sigma$ denote an ordering of the vertices in $\iV(T)$ such that the 
vertices in $V(h)$ come first (in any order), then (again in any order)
the vertices in $V(h-1)$ and so on with the last vertex in that ordering 
being $\rho_T$. 

(i) Suppose $v\in \iV(T)$. If $v\in V(h)$ then
we may assume without loss of generality
that $v=v_1$. Then $v_1$ is the parent of a pseudo-cherry of $T$ 
and so Theorem~\ref{theo:characterization-topology} implies that
$H_{\cL}(v_1)$ is a maximal clique in $\Gamma(\cL)$. Thus, 
$\cL_1:=\cL$ is a minimal topological lasso for $T$ that
satisfies Properties~(L1) and (L2).

So assume that $v\not\in V(h)$. Then there exists some 
$|V(h)|< i\leq k$ such that $v=v_i$. Without loss of 
generality, we may assume that $v_i$ is such that,
for all $j\in \langle i-1\rangle$, $\cL_j$ is a minimal 
topological lasso for $T$ that satisfies 
Properties~(L1) and (L2).
If $v_i$ is the parent of a pseudo-cherry of $T$ then similar arguments 
as before imply that $\cL_i:=\cL_{i-1}$ is a minimal topological
lasso for $T$ that satisfies Properties~(L1) and (L2). So assume
that $v_i$ is not the parent of a pseudo-cherry of $T$.
We distinguish between the cases that 
$H_{\cL_{i-1}}(v)$ is a maximal clique in $\cL_{i-1}$
and that it is not. 

Assume first that $H_{\cL_{i-1}}(v)$ is a maximal clique in 
$\cL_{i-1}$. Then since 
$\cL_{i-1}$ is a minimal topological lasso for $T$ that
 satisfies Properties~(L1) and (L2), it is easy to see that
$\cL_{i}:=\cL_{i-1}$ is also a minimal topological lasso for $T$ that
satisfies Properties~(L1) and (L2). 
So assume that 
$H_{\cL_{i-1}}(v)$ is not a maximal clique in $\cL_{i-1}$.
Then $H_{\cL_{i-1}}(v)$ must contain three pairwise distinct edges, 
$e_1=x_1y_1$, $e_2=x_2y_2$, and $e_3=x_3y_3$ say, such that 
$\{e_1, e_2,e_3\}$ is not the edge set of a $3$-clique in $H_{\cL_{i-1}}(v)$.
For all $i\in\langle 3\rangle $, put $z_i=lca_T(x_i,y_i)$.  
Then Lemma~\ref{lem:3-edges-cycle} combined with a repeated
application of Rule (R) to $\cL_{i-1}$
implies that, for all $i\in\langle 3\rangle$, we can 
find elements $x_i'\in L(z_i)$
such that 
$$
\cL_{i-1}'=\cL_{i-1}-\{x_1y_1,x_2y_2,x_3y_3\}\cup 
\{x_1'x_2', x_2'x_3',x_3'x_1'\}
$$
is a minimal topological lasso for $T$ and the cords
$x_1'x_2'$, $x_2'x_3'$, and $x_3'x_1'$ form a $3$-clique 
in $H_{\cL_{i-1}'}(v)$.
Transforming $\cL_{i-1}'$ further by processing any 
three pairwise distinct edges in $H_{\cL_{i-1}'}(v)$ that 
do not already form a $3$-clique in the same way 
and so on eventually yields a minimal topological lasso 
$\cL_{i}$ for $T$ such that any three pairwise distinct edges in
 $H_{\cL_{i}}(v)$ form a $3$-clique. But this implies that $H_{\cL_{i}}(v)$ is
a maximal clique in $\Gamma(\cL_{i})$
and so Property~(L1) is satisfied by $\cL_i$. Since only edges 
$e$ of $\Gamma(\cL_{i-1})$ 
have been modified by the above transformation 
for which $\gamma_{i-1}(e)=v$ holds and, by assumption,  $\cL_{i-1}$
satisfies Property~(L2) it follows that
$\cL_{i}$ also satisfies that property. 

Processing the successor of $v_i$ in $\sigma$ in the same way and so
on yields a minimal topological lasso 
$\cL^{\dagger}$ for $T$ for which $\Gamma(\cL^{\dagger})$ is a 
block graph. This completes the proof of (i).

(ii) For all $i\in \langle k\rangle$
and all vertices $w\in \iV(T)$ put $B^i_w=B^{\cL_i}_w$.
Suppose that $v\in \iV(T)$. If $v\in V(h)$ then
we may assume without loss of generality
that $v=v_1$. Then $v$ is the parent of a pseudo-cherry of $T$ 
and so $\cL_1:=\cL$ clearly satisfies Properties~(L1')
and (L2').

So assume that $v\not\in V(h)$. Then there exists some 
$|V(h)|< i\leq k$ such that $v=v_i$. Without loss of generality, 
we may assume that $v_i$ is minimal, 
that is, for all $j\in\langle i-1\rangle$, we have that
$\cL_j$ is a minimal topological lasso for $T$ that
satisfies Properties~(L1') and (L2'). If $v$ is the 
parent of a pseudo-cherry of $T$ then similar
arguments as before imply that $\cL_i:=\cL_{i-1}$
satisfies Properties~(L1') and (L2'). So assume that
$v$ is not the parent of a pseudo-cherry of $T$.
If $\Gamma_v(\cL_{i-1})$ is a claw-free  block graph  then
setting $\cL_i:=\cL_{i-1}$ implies that $\cL_i$ satisfies
Properties~(L1') and (L2').

So assume that 
this is not the case, that is, there exists a vertex $x\in L(v)$
that, in addition to being  a vertex in the block 
$B_v^{i-1}$ of $\Gamma(\cL_{i-1})$
 and thus  of $\Gamma_v(\cL_{i-1})$,
is also a vertex in $l\geq 2$ further blocks $B_1,\ldots, B_l$
 of $\Gamma_v(\cL_{i-1})$ which are also blocks in 
$\Gamma(\cL)$. Then there exists a path $P$ from $v$ 
 to $x$ in $T$ that contains, for all $l\geq 2$, the vertices 
 $\psi^{-1}(B_1),\ldots, \psi^{-1}(B_l)$ in its vertex set
where $\psi:\iV(T)\to \Block(\Gamma(\cL))$ is the map from 
Corollary~\ref{cor:bijection}. Let $w\in ch(v)$
 denote the child of $v$ that lies on $P$. Note that since $l\geq 2$, 
we have $w\in \iV(T)$. Without loss of generality, 
 we may assume that $w=v_{i-1}$. 
 The fact that  $\Gamma(\cL_{i-1})$ is a block graph and so
  $\Gamma_{v_{i-1}}(\cL_{i-1})$ is a block graph combined with
the fact that  $\Gamma_{v_{i-1}}(\cL_{i-1})$ is connected 
 implies, in view of the observation
preceding Theorem~\ref{theo:transform}, that
we may choose some
 $y\in L(v_{i-1})-Cut(\Gamma_{v_{i-1}}(\cL_{i-1}))$. Then $y$
 is a vertex in precisely one block of $\Gamma_{v_{i-1}}(\cL_{i-1})$
 and thus can be a vertex in at most two 
 blocks of $\Gamma_v(\cL_{i-1})$.
Consequently, $y\not=x$. 
Applying Rule (R) repeatedly to $\cL_{i-1}$, 
let $\cL_i$ denote the set of cords obtained
from $\cL_{i-1}$ by replacing, for all $i\leq l\leq k$, every cord 
of $\cL_{i-1}$ of the form $xa$ with $a\in V(B_{v_l}^{i-1})$ by the cord
$ya$. Then, by construction, $\cL_i$ is a minimal topological lasso
for $T$ and $\Gamma(\cL_i)$ is a block graph. Hence, $\cL_i$ satisfies
Property~(L1'). Moreover, since
$\Gamma_{v_{i-1}}(\cL_{i-1})$ is claw-free it follows that
$\Gamma_{v_i}(\cL_i)$ is claw-free and so $\cL_i$ satisfies
Property~(L2'), too.

Applying the above arguments to the successor of $v_i$ 
in $\sigma$ and so on eventually
yields a minimal topological lasso $\cL_k$ for $T$  
that satisfies Properties~(L1') and (L2'). Thus,
$\Gamma_{v_k}(\cL_k) $ is a claw-free block graph
and, so, $\cL^*$ is a distinguished
minimal topological lasso for $T$.
\qquad
\end{proof}

To illustrate Theorem~\ref{theo:transform}, let 
$X=\{a,\ldots, f\}$ and consider the
$X$-tree $T'$ depicted in Fig.~\ref{fig:block-graph-motivation}(iii) 
along with the
set $\cL=\{ad,ec,fa,ef,cd,bd\}$ of cords of $X$ which we depict in
Fig.~\ref{fig:transformation}(i) in the form of $\Gamma(\cL)$.
\begin{figure}[h]
\begin{center}
\input{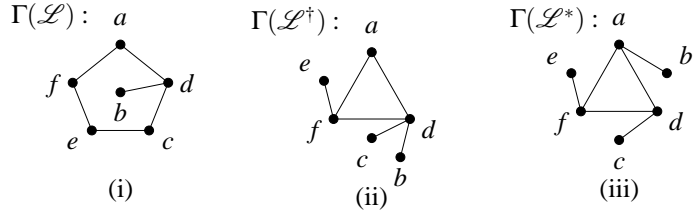}
\end{center}
\caption{\label{fig:transformation}
For $X=\{a,\ldots, f\}$ 
and the $X$-tree $T'$ pictured in Fig.~\ref{fig:block-graph-motivation}(iii),
we depict in (i) the minimal topological lasso $\cL=\{ad,ec,fa,fe,cd,bd\}$ 
for $T'$  in the form of $\Gamma(\cL)$. 
In the same way as in (i), we depict in (ii) 
the transformed minimal topological lasso $\cL^{\dagger}$ for $T'$ such that
$\Gamma(\cL^{\dagger})$  is a block graph and in  (iii) the
distinguished minimal topological lasso $\cL^*$ for $T'$ obtained from
$\cL^{\dagger}$ -- see text for details. 
}
\end{figure}
Using for example Theorem~\ref{theo:characterization-topology},
it is straight-forward to check that $\cL$ is a minimal
topological lasso for $T'$ but $\Gamma(\cL)$ is clearly not a block graph 
and so $\cL$ is also not distinguished. To transform $\cL$ into
a distinguished minimal topological lasso $\cL^*$
for $T'$ as described in Theorem~\ref{theo:transform}, consider the ordering 
$v_1=lca_{T'}(e,f)$, $v_2=lca_{T'}(c,d)$, $v_3=lca_{T'}(a,d)$, $v_4=\rho_{T'}$
of the interior vertices of $T'$. For all $i\in\langle 4\rangle$, 
put $\cL_i=\cL_{v_i}$. Then we first transform $\cL$ into a minimal
topological lasso $\cL^{\dagger}$ for $T'$ as described in 
Theorem~\ref{theo:transform}(i). For this we
have $\cL=\cL_0=\cL_1=\cL_2$
and $\cL_3$ is obtained from $\cL_2$ by first applying Rule (R) to 
the cords $ec, cd \in \cL_2$ resulting in the deletion of the
cord $ce$ from $\cL_2 $ and the addition of the cord $ed$ to $\cL_2$
and then to the cords $fe,ed\in\cL_2$  resulting in the 
deletion of the cord $ed$ from $\cL_2$ and the addition of the cord
$fd$ to it. The graph $\Gamma(\cL_3)$ is  depicted in 
Fig.~\ref{fig:transformation}(ii). 
Note that $\cL_3=\cL^{\dagger}$ and that although $\Gamma(\cL^{\dagger})$
is clearly a block graph $\cL^{\dagger}$ is not distinguished. 

To transform
$\cL^{\dagger}$ into a distinguished minimal topological lasso $\cL^*$ 
for $T'$, we next apply Theorem~\ref{theo:transform}(ii). For this, we
need only consider the vertex $d$ of $\Gamma(\cL^{\dagger})$ that is, we 
have $\cL^{\dagger}=\cL_0=\cL_1=\cL_2=\cL_3$.
Since the child of $v_4$ on the path from $v_4$ to $d$ is 
$v_3$, we may choose $a$ as the element $y$
in $L(v_3)-Cut(\Gamma_{v_3}(\cL_3))$. Then applying Rule (R)
to the cords $bd,da\in \cL_3$ implies the deletion of $bd$ from $\cL_3$
and the addition of the cord $ab$ to it. The resulting minimal topological lasso
for $T'$ is $\cL^*$ which we depict in Fig.~\ref{fig:transformation}(iii)
in the form of $\Gamma(\cL^*)$.
 
We conclude this section by remarking in passing that 
combined with Theorem~\ref{theo:characterization-topology} 
which implies that any minimum sized topological lasso for an $X$-tree
$T$ must have $\sum_{v\in \iV(T)}{|ch(v)|\choose 2}$ cords, 
Theorem~\ref{theo:transform} and Corollary~\ref{cor:bijection}
imply that the minimum sized topological lassos of an $X$-tree $T$
are precisely the minimal topological lassos of $T$.
 
\section{A sufficient condition for a minimal topological lasso
to be distinguished} \label{sec:sufficient}
In this section, we turn our attention towards
presenting a sufficient condition
for a minimal topological lasso for some $X$-tree $T$ to
be a distinguished minimal topological lasso for $T$.
In the next section, we will show that this condition is also 
sufficient.  

We start our discussion with introducing some more terminology.
Suppose $T$ is a non-degenerate $X$-tree. Put 
$cl(T)=\{L(v): v\in \iV(T)-\{\rho_T\}\}$ and note that 
$cl(T)\not=\emptyset$. For all $A\in cl(T)$,
put $cl_A(T):=\{B\in cl(T): B\subsetneq A\}$ and note that
a vertex $v\in \iV(T)-\{\rho_T\}$ is the parent of a pseudo-cherry
of $T$ if and only if $cl_{L(v)}(T)=\emptyset$.
For $\sigma$ 
a total ordering of $X$ and $\min_{\sigma}(C)$ denoting the minimal
element of a non-empty subset $C$ of $X$, we call
a map of the form 
$$
f:cl(T)\to X:
A\mapsto \left\{\begin{array}{cc}
\min_{\sigma}(A-\{f(B): B\in cl_A(T)\})
 & \mbox{ if }cl_A(T)\not=\emptyset,\\
\min_{\sigma}(A)  & \mbox{ else. }
\end{array}
\right.
$$ 
a {\em cluster marker map (for $T$ and $\sigma$)}. 
Note that since $|\iV(T')|\leq |X|-1$ holds for all $X$-trees $T'$
and so $A-\{f(B): B\in cl_A(T)\}\not=\emptyset$  must
hold for all $A\in cl(T)$ with $cl_A(T)\not=\emptyset $,
it follows that  $f$ is well-defined.
Also note that if 
$v\in \iV(T)$ is the parent of a pseudo-cherry $C$ of $T$
then $f(L(v))=f(C)= \min_{\sigma}(C)$
as $cl_C(T)=\emptyset$ in this case. Finally, note that
it is easy to see that a cluster marker map must be injective
but need not be surjective.

We are now ready to present a construction of a distinguished minimal
topological lasso which underpins the aforementioned sufficient condition
that a minimal topological lasso must satisfy to be distinguished.
Suppose that $T$ is a non-degenerate $X$-tree, that $\sigma$ is a 
total ordering of $X$, and that
$f:cl(T)\to X$ is a cluster marker map for $T$
and $\sigma$. We first
associate to every interior vertex $v\in \iV(T)$ a set $\cL_{(T,f)} (v)$
defined as follows. Let $l_1,\ldots, l_{k_v}$ denote the children of $v$
that are leaves of $T$ and let $v_1,\ldots v_{p_v}$ denote the
children of $v$ that are also interior vertices of $T$. Note that $k_v=0$
or $p_v=0$ might hold but not both. Put 
${\emptyset \choose 2}={\langle 1\rangle \choose 2}=\emptyset$. Then we set
$$
\cL_{(T,f)}(v):=\bigcup_{\{i,j\}\in {\langle k_v\rangle\choose 2}}\{l_il_j\}
\cup
\bigcup_{\{i,j\}\in {\langle p_v\rangle\choose 2}}\{f(L(v_i))f(L(v_j))\}
\cup
\bigcup_{i\in \langle k_v\rangle,\,\,j\in \langle p_v\rangle} 
\{l_if(L(v_j))\}.
$$
Note that 
$|\cL_{(T,f)}(v)|\geq 1$
must hold for all $v\in \iV(T)$. Finally, we set
$$
\cL_{(T,f)}:=\bigcup_{v\in \iV(T)} \cL_{(T,f)}(v).
$$

To illustrate these definitions, consider the $X=\{a,\ldots, f\}$-tree 
$T'$ depicted in Fig.~\ref{fig:block-graph-motivation}(iii). 
Let $\sigma$ denote the
lexicographic ordering of the elements in $X$. Then
the map $f:cl(T')\to X$ defined by setting
$$
f(\{c,d\})=c, \,\,\,
f(\{e,f\})=e,\,\,\mbox{ and } f(X-\{b\})=a
$$
is a cluster marker map for $T'$ and $\sigma$ and 
$\cL_{(T,f)}$
(or more precisely the graph $\Gamma(\cL_{(T',f)})$) is depicted in
Fig.~\ref{fig:block-graph-motivation}(i).

To help establish Theorem~\ref{theo: distinguished-lasso-verification}, 
we require
some intermediate results which are of interest in their own right and
which we present next. To this end, we denote
for a vertex $v\in \iV(T)-\{\rho_T\}$ by $T(v)$
the $L(v)$-tree with root $v$ obtained from $T$ 
by deleting the parent edge of $v$.

\begin{lemma}\label{lem:insights}
Suppose $T$ is a non-degenerate $X$-tree, 
$\sigma$ is a total ordering of $X$, and
$f:cl(T)\to X$ is a cluster
marker map for $T$ and $\sigma$. Then the following hold
\begin{enumerate}
\item[(i)]
$\cL_{(T,f)}$ is a minimal topological lasso for $T$.
\item[(ii)] $\Gamma(\cL_{(T,f)})$ is connected.
\item[(iii)] If $v$ and $w$ are distinct interior vertices of $T$
then $|\bigcup \cL_{(T,f)}(v)\cap \bigcup \cL_{(T,f)}(w)|\leq 1$.
\item[(iv)] Suppose $x\in X$. Then there exist distinct vertices
$v,w\in \iV(T)$ 
such that $x\in \bigcup \cL_{(T,f)}(v)\cap \bigcup \cL_{(T,f)}(w)$ 
if and only if there
exists some $u\in \iV(T)-\{\rho_T\}$ such that 
$x=f(L(u))$. 
\end{enumerate}
\end{lemma}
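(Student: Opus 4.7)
The backbone of the argument is the following identification: for every $v\in\iV(T)$, the set $\bigcup\cL_{(T,f)}(v)$ coincides with the set $R_v$ of \emph{representatives} of the children of $v$, where a leaf child $l_i$ represents itself and a non-leaf child $v_j$ is represented by $f(L(v_j))$. In particular $R_v\subseteq L(v)$, and the representatives of distinct children of $v$ are pairwise distinct because the leaf sets of distinct children of $v$ are pairwise disjoint.

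For (i), I would verify the clique condition of Theorem~\ref{theo:characterization-topology}: for any $v\in\iV(T)$ and any two distinct child edges $e_1,e_2$ of $v$, the cord joining the representatives of the two corresponding children lies in $\cL_{(T,f)}(v)$ and crosses both edges, so $G(\cL_{(T,f)},v)$ is a clique and $\cL_{(T,f)}$ is a topological lasso for $T$. Since every cord in $\cL_{(T,f)}(v)$ has last common ancestor equal to $v$, the sets $\cL_{(T,f)}(v)$ for distinct $v\in\iV(T)$ are pairwise disjoint, and combined with $|\cL_{(T,f)}(v)|={|ch(v)|\choose 2}$ this yields $|\cL_{(T,f)}|=\sum_{v\in\iV(T)}{|ch(v)|\choose 2}$. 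Since Lemma~\ref{lem:size-A(v)} tells us that every minimal topological lasso for $T$ has this same size, $\cL_{(T,f)}$ must itself be minimal. Part (ii) is then immediate from Proposition~\ref{prop:gamma-l-connected}.

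For (iii), the containment $\bigcup\cL_{(T,f)}(v)=R_v\subseteq L(v)$ handles incomparable $v,w$ directly, since then $L(v)\cap L(w)=\emptyset$. Otherwise, without loss of generality $v$ is an ancestor of $w$; letting $v_*$ denote the child of $v$ on the path from $v$ to $w$, we have $R_w\subseteq L(w)\subseteq L(v_*)$, while the only element of $R_v$ that lies in $L(v_*)$ is the representative of $v_*$, yielding $|R_v\cap R_w|\le 1$.

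For (iv), the reverse direction uses part (iii) to conclude that any two distinct vertices $v,w$ with $x\in R_v\cap R_w$ must be comparable in $T$; assuming $v$ is an ancestor of $w$ and letting $v_*$ be the child of $v$ on the path to $w$, the element $x$ must be the representative of $v_*$, and $v_*$ cannot be a leaf because $w\in\iV(T)$ either equals $v_*$ or is a descendant of $v_*$, forcing $v_*$ to be interior, so $x=f(L(v_*))$ with $v_*\in\iV(T)-\{\rho_T\}$. Conversely, if $x=f(L(u))$ for some $u\in\iV(T)-\{\rho_T\}$, writing $p$ for the parent of $u$ gives $x\in R_p$ by construction (since $u$ is a non-leaf child of $p$); descending the unique path $u=u_0,u_1,\ldots,u_k=x$ from $u$ down to the leaf $x$ in $T$, the penultimate vertex $u_{k-1}$ is interior with $x$ as a leaf child, so $x\in R_{u_{k-1}}$, and since $u_{k-1}$ either equals $u$ or is a descendant of $u$ while $p$ is an ancestor of $u$, we have $p\ne u_{k-1}$. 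The subtlest step is the exclusion of $v_*$ being a leaf in the reverse direction of (iv), which is handled by using that the other witness $w$ is itself an interior vertex lying in $L(v_*)$.
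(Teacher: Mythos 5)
Your proposal is correct and follows essentially the same route as the paper: the paper also proves (i) via the clique criterion of Theorem~\ref{theo:characterization-topology} together with the counting from Lemma~\ref{lem:size-A(v)}, gets (ii) from Proposition~\ref{prop:gamma-l-connected}, and derives (iii) and (iv) from the observation that $\bigcup\cL_{(T,f)}(v)$ consists of exactly one representative per child of $v$ (so that any two distinct elements of it have last common ancestor $v$). Your explicit handling of the edge cases in (iv) (why the relevant child is interior, and why the two witnessing vertices are distinct) fills in details the paper leaves implicit, but the argument is the same.
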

\begin{proof}
For all $v\in \iV(T)$, set $\cL(v)=\cL_{(T,f)}(v)$.

(i) This is an immediate consequence of 
Theorem~\ref{theo:characterization-topology} and
the respective definitions of the set
$\cL(v)$ where $v\in \iV(T)$
and the graph $G(\cL',v)$ where $\cL'$ is a set
of cords of $X$ and $v$ is again an interior vertex of $T$.

(ii) This is an immediate consequence of 
Proposition~\ref{prop:gamma-l-connected} combined with
Lemma~\ref{lem:insights}(i).

(iii) This is an immediate consequence of the fact that, for all vertices
$u\in \iV(T)$ and all $x,y\in \bigcup\cL(u)$ distinct, we have
$u=lca_T(x,y)$.

(iv) Let $x\in X$ and assume first that there exist 
distinct vertices $v,w\in \iV(T)$ 
such that $x\in \bigcup \cL(v)\cap \bigcup \cL(w)$ but
 $x\not =f(L_T(u))$, for all $u\in \iV(T)-\{\rho_T\}$. Then
 $x$ must be a leaf of $T$ that is simultaneously adjacent with $v$ and $w$
which is impossible. Thus, there must exist some $u\in \iV(T)$ such
that $x=f(L(u))$.

Conversely, assume that $x=f(L(u))$ for some 
$u\in \iV(T)-\{\rho_T\}$. Then $x\in L(u)$
and so there must exist an
interior vertex $w$ of $T(u)$ that is adjacent with $x$.
Hence, $x\in\bigcup \cL(w)$.  Let $v$ denote the
parent of $u$ in $T$ which exists since $u\not=\rho_T$. 
Then $x=f(L(u))\in \bigcup \cL(v)$ and so  
$x\in \bigcup \cL(v)\cap \bigcup \cL(w)$,
as required.
\qquad \end{proof}

Note that $u\in \{v,w\}$ need not hold for $u$, $v$
and $w$ as in the statement of Lemma~\ref{lem:insights}(iv). 
Indeed, suppose $T$ is the $X=\{a,b,c,d\}$-tree
with unique cherry $\{a,b\}$ and $d$ adjacent with the
root $\rho_T$ of $T$. Let $\sigma$ denote the lexicographic ordering 
of $X$ and let $f:cl(T)\to X$ be (the unique) 
cluster marker map for $T$ and $\sigma$.
Set $x=b$, $v=lca_T(a,b)$, $w=\rho_T$.
Then $x=f(L(u))$ where $u=lca_T(a,c)$ and
$x\in \bigcup\cL(v)\cap \bigcup\cL(w)$ but $u\not\in \{v,w\}$.

\begin{proposition}\label{prop:block}
Suppose $T$ is a non-degenerate $X$-tree,
 $\sigma$ is a total ordering of $X$, 
 and $f:cl(T)\to X$ is a cluster
marker map for $T$ and $\sigma$. Then  $\Gamma(\cL_{(T,f)})$
is a connected block graph and
every block of $\Gamma(\cL_{(T,f)})$ is of the form
$\Gamma(\cL_{(T,f)}(v))$, for some $v\in \iV(T)$.
\end{proposition}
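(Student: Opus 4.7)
The plan is to characterize $\Gamma(\cL_{(T,f)})$ as chordal and $K_4-e$-free, which is a standard characterization of block graphs (due to Howorka), and then to identify the blocks with the cliques $\Gamma(\cL_{(T,f)}(v))$ via a triangle property. For brevity I set $\cL(v) := \cL_{(T,f)}(v)$ and $S_v := \bigcup \cL(v)$. By construction $\Gamma(\cL(v))$ is complete on $S_v$, a set of size $|ch(v)|$, and inspecting the three types of cords in the definition of $\cL(v)$ shows that any two distinct elements of $S_v$ have $lca_T$-value exactly $v$. Together with Lemma~\ref{lem:insights}(iii), this means each cord of $\cL_{(T,f)}$ lies in a unique $\cL(v)$, namely $\cL(lca_T(a,b))$. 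Connectedness of $\Gamma(\cL_{(T,f)})$ is Lemma~\ref{lem:insights}(ii).

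The first key step is to show that every triangle of $\Gamma(\cL_{(T,f)})$ sits inside some single $\Gamma(\cL(v))$. For a triangle on leaves $a,b,c$, a standard property of rooted trees gives that at least two of the three pairwise last common ancestors coincide with $lca_T(\{a,b,c\})$; say $lca_T(a,b) = lca_T(a,c) = v^*$. Then $a,b,c \in S_{v^*}$, and by the observation above $lca_T(b,c)=v^*$ as well, placing all three edges in $\cL(v^*)$. This triangle lemma immediately rules out an induced $K_4-e$: in a putative diamond $\{a,b,c,d\}$ with missing edge $cd$, the triangles $\{a,b,c\}$ and $\{a,b,d\}$ both have to lie in $\cL(lca_T(a,b))$ (the unique clique containing the shared edge $ab$), forcing $c,d\in S_{lca_T(a,b)}$ and hence the cord $cd$ into $\cL_{(T,f)}$, contradicting the assumption.

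The main obstacle is excluding induced cycles $C: x_1 x_2 \cdots x_k x_1$ of length $k\geq 4$. Setting $v_i := lca_T(x_i,x_{i+1})$ with indices mod $k$, the clique structure of $\Gamma(\cL(v_i))$ together with the no-chord hypothesis forces $v_1,\ldots,v_k$ to be pairwise distinct. Pick $v_1$ maximal in the ancestor order of $T$ on $\{v_1,\ldots,v_k\}$; using that consecutive cliques $\cL(v_i),\cL(v_{i+1})$ share the vertex $x_{i+1}$ and must therefore have tree-comparable indices, a short induction shows that every $v_j$ with $j\neq 1$ is a strict descendant of $v_1$. The edge $x_1x_2 \in \cL(v_1)$ together with the classification of $S_{v_1}$ then yields $x_1 = f(L(w_1))$ and $x_2 = f(L(w_2))$ for two distinct interior children $w_1,w_2$ of $v_1$ (the leaf-child alternative is ruled out because the strict descendants $v_k, v_2$ of $v_1$ are interior vertices having $x_1,x_2$ in their leaf sets). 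A second induction tracks the descent along the cycle and shows that $v_j$ lies in the closed subtree rooted at $w_2$ for all $j = 2,\ldots, k-1$, whence $x_k \in L(v_{k-1}) \subseteq L(w_2)$; but the symmetric analysis on the $v_k$-end places $v_k$ in the closed subtree rooted at $w_1$ and hence $x_k \in L(w_1)$, contradicting $L(w_1)\cap L(w_2) = \emptyset$.

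Once $\Gamma(\cL_{(T,f)})$ is shown to be chordal and $K_4-e$-free, it is a block graph whose blocks coincide with its maximal cliques. The triangle lemma again gives that each $\Gamma(\cL(v))$ is a maximal clique: a vertex $y$ adjacent to two distinct $x_1,x_2 \in S_v$ would form a triangle $\{y,x_1,x_2\}$ whose single containing clique must be $\cL(v)$ (since it contains $x_1x_2 \in \cL(v)$), forcing $y\in S_v$. Therefore the blocks of $\Gamma(\cL_{(T,f)})$ are precisely $\{\Gamma(\cL(v)) : v\in\iV(T)\}$, establishing both assertions of the proposition.
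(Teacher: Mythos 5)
Your proof is correct, but it takes a genuinely different route from the paper's. The paper establishes in one stroke the stronger claim that \emph{every} cycle of $\Gamma(\cL_{(T,f)})$ lies inside a single $\Gamma(\cL_{(T,f)}(v))$: for a minimal counterexample cycle $C$, minimality forces the colours $lca_T(u_i,u_{i+1})$ of its edges to be pairwise distinct, and these are all interior vertices of the restricted tree $T|_{V(C)}$, which has at most $|V(C)|-1$ interior vertices while $C$ has $|V(C)|$ edges --- a pigeonhole contradiction; together with Lemma~\ref{lem:insights}(ii) and (iii) this yields both assertions. You instead verify the chordal-plus-diamond-free characterization of block graphs, disposing of triangles via the fact that two of the three pairwise last common ancestors of three leaves coincide, and of long induced cycles via an explicit descent argument (choose a colour $v_1$ maximal in the ancestor order, use comparability of consecutive colours to push every other colour strictly below $v_1$, and derive the contradiction $x_k\in L(w_1)\cap L(w_2)=\emptyset$). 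Both arguments are sound. The paper's pigeonhole count is shorter and sidesteps the case analysis in your chordality step; your route has the advantage of making the identification of the blocks with the maximal cliques $\Gamma(\cL_{(T,f)}(v))$ fully explicit, a step the paper dismisses as immediate from the construction. If you keep your version, do state (or cite) the fact that a graph is a block graph if and only if it is chordal and contains no induced $K_4-e$; it is standard, though usually attributed to Bandelt and Mulder rather than Howorka, and its short proof (a $2$-connected chordal diamond-free graph is complete) is worth a line.
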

\begin{proof}
For all $v\in \iV(T)$, set $\cL(v)=\cL_{(T,f)}(v)$ and
put $\cL=\cL_{(T,f)}$.
We claim that if $C$ is a cycle in $\Gamma(\cL)$ of length at least
three then there must exist some $v\in \iV(T)$ such that $C$ is
contained in $\Gamma(\cL(v))$. Assume to the contrary that this is not the
case, that is, there exists some cycle $C:u_1,u_2,\ldots, u_l,u_{l+1}=u_1$,
$l\geq 3$, in $\Gamma(\cL)$ such that, for all $v\in \iV(T)$,
we have that $C$ is not a cycle in $\Gamma(\cL(v))$. Without loss of
generality, we may assume that $C$ is of minimal length. For 
all $i\in\langle l\rangle$, put $v_i=lca_T(u_i,u_{i+1})$. Then, by 
the construction
of $\Gamma(\cL)$, we have for all such $i$ that $u_iu_{i+1}$ is 
an edge in $\Gamma(\cL(v_i))$ and, by the minimality of $C$,
that $v_i\not=v_j$ for all 
$i,j\in\langle l\rangle$
distinct. Put $Y=V(C)$ and let $T'=T|_Y$ denote the $Y$-tree obtained
by restricting $T$ to $Y$. Note that $lca_T(u_i,u_{i+1})=lca_{T'}(u_i,u_{i+1})$ 
holds for all $i\in\langle l\rangle$.
Thus, the  map
$\phi:E(C)\to \iV(T')$ defined by putting
 $u_iu_{i+1}\mapsto lca_{T}(u_i,u_{i+1})$,
$i\in\langle l \rangle$, is well-defined. 
Since $|E(C)|=l$ and
for any finite set $Z$ with three or more elements a $Z$-tree has
at most $|Z|-1$ interior vertices, it follows that there exist
$i,j\in \langle l \rangle$ distinct such that 
$\phi(u_i,u_{i+1})=\phi(u_j,u_{j+1})$. Consequently,
$v_i=lca_T(u_i,u_{i+1})=lca_T(u_j,u_{j+1})=v_j$
which is impossible and thus proves the claim.
Combined with Lemma~\ref{lem:insights}(ii) and (iii), 
it follows that $\Gamma(\cL)$ is a
connected block graph. That the blocks of $\Gamma(\cL)$ are of the
required form is an immediate consequence of the construction of
$\Gamma(\cL)$.
\qquad \end{proof}

To be able to establish that $\cL_{(T,f)}(v)$
is indeed a distinguished minimal topological lasso for $T$ and $f$
as above, we require a further concept. Suppose $A, B\subseteq X$ are
two distinct non-empty subsets of $X$. Then $A$ and $B$ are said to be
 {\em compatible} if $A\cap B\in\{\emptyset, A,B\}$. As is
 well-known (see e.\,g.\,\cite{DHKMS11,SS03}), for any $X$-tree $T'$ and any
  two vertices $v,w\in V(T')$
 the subsets $L(v)$ and $L(w)$ of $X$ are 
 compatible.

\begin{theorem}\label{theo: distinguished-lasso-verification}
Suppose $T$ is a non-degenerate  $X$-tree, 
$\sigma$ is a total ordering of $X$ and $f:cl(T)\to X$ is a cluster
marker map for $T$ and $\sigma$. Then 
$\cL_{(T,f)}$ is a distinguished minimal topological lasso for $T$.
\end{theorem}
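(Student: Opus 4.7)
The plan is to combine Lemma~\ref{lem:insights}(i) and Proposition~\ref{prop:block} with a count of how many blocks can contain a single vertex. Lemma~\ref{lem:insights}(i) already guarantees that $\cL_{(T,f)}$ is a minimal topological lasso for $T$, and Proposition~\ref{prop:block} tells us that $\Gamma(\cL_{(T,f)})$ is a connected block graph whose blocks are precisely the cliques $\Gamma(\cL_{(T,f)}(v))$ for $v\in\iV(T)$. It therefore remains only to verify that $\Gamma(\cL_{(T,f)})$ is claw-free.

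My first step will be to record the following well-known observation: a block graph $G$ is claw-free if and only if every vertex of $G$ lies in at most two blocks. Indeed, if $x$ lies in three distinct blocks $B_1,B_2,B_3$ of $G$, then choosing any $y_i\in V(B_i)\setminus\{x\}$ yields a triple of pairwise non-adjacent vertices (since two distinct blocks of a block graph share at most one vertex), so $\{x,y_1,y_2,y_3\}$ induces a claw; the converse direction is immediate. Combined with Proposition~\ref{prop:block}, it thus suffices to prove that for every $x\in X$, there are at most two vertices $v\in\iV(T)$ with $x\in\bigcup\cL_{(T,f)}(v)$.

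Fix $x\in X$. Inspecting the construction of $\cL_{(T,f)}(v)$, a vertex $v\in\iV(T)$ satisfies $x\in\bigcup\cL_{(T,f)}(v)$ precisely when either (a) $x$ is a leaf-child of $v$ in $T$, or (b) $x=f(L(w))$ for some interior child $w$ of $v$. Case (a) forces $v$ to be the unique parent of $x$ in $T$, producing at most one candidate. In case (b), the injectivity of $f$ (noted just before Lemma~\ref{lem:insights}) pins $w$ down to the unique $u\in\iV(T)-\{\rho_T\}$ with $x=f(L(u))$, should such a $u$ exist at all; and then $v$ is forced to be the parent of $u$, producing at most one further candidate. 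Hence $x$ lies in at most two of the sets $\bigcup\cL_{(T,f)}(v)$, and therefore in at most two blocks of $\Gamma(\cL_{(T,f)})$.

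I anticipate no substantive obstacle beyond this bookkeeping. The argument is essentially a distillation of Lemma~\ref{lem:insights}(iii) and (iv): part (iii) already rules out multi-vertex overlap of blocks, while part (iv), together with the injectivity of $f$, is exactly what caps the number of blocks through a single vertex at two. Once this cap is established, the equivalence recorded in the second paragraph delivers claw-freeness of $\Gamma(\cL_{(T,f)})$, and combined with Lemma~\ref{lem:insights}(i) and Proposition~\ref{prop:block} this gives that $\cL_{(T,f)}$ is a distinguished minimal topological lasso for $T$.
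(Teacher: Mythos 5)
Your proposal is correct and follows essentially the same route as the paper: both reduce the claim to claw-freeness of $\Gamma(\cL_{(T,f)})$ via Lemma~\ref{lem:insights}(i) and Proposition~\ref{prop:block}, and then bound the number of blocks through a vertex $x$ by observing that $x$ can occur in $\bigcup\cL_{(T,f)}(v)$ only as a leaf-child of $v$ or as $f(L(w))$ for an interior child $w$ of $v$, with injectivity of $f$ capping the latter. The paper packages this last step as a contradiction argument using compatibility of the clusters $L(v_i)$, whereas your direct enumeration of the at most two candidate vertices is a cleaner rendering of the same idea.
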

 \begin{proof}
For all $v\in \iV(T)$ put $\cL(v) = \cL_{(T,f)}(v)$ and
put $\cL=\cL_{(T,f)}$.
In view of Proposition~\ref{prop:block} and Lemma~\ref{lem:insights}(i), 
it suffices to show that $\Gamma(\cL)$ is claw-free.
 Assume to the contrary that this is not the case and that there
exists some $x\in X$ that is contained in the vertex set of $m\geq 3$
blocks $A_1,\ldots,A_m$ of $\Gamma(\cL)$. Then, by 
Proposition~\ref{prop:block},
 there  exist distinct interior vertices $v_1, \ldots, v_m$ of $T$ such that,
 for all $i\in\langle m\rangle$, we have 
$V(A_i)=\bigcup\cL(v_i)\subseteq L(v_i)$. 
Since for all
$v,w\in V(T)$ distinct, the sets $L(v)$ and $L(w)$ are compatible,  
it follows that there exists a path $P$ from
$\rho_T$ to $x$ that contains the vertices $v_1,\ldots, v_m$ in its
vertex set. Without loss of generality we may assume that
$m=3$ and that, starting at $\rho_T$ and moving along $P$ 
the vertex $v_1$ is encountered first
then $v_2$ and then $v_3$. Note that 
$cl_{L(v_i)}(T)\not=\emptyset$, for $i=1,2$.
Since $T$ is a tree and so $x$ can neither be adjacent with 
$v_1$ nor with $v_2$ it follows that 
there must exist for $i=1,2$  some $B_i\in cl_{L(v_i)}(T)$ 
such that $x=f(B_i)$. But this is impossible as 
$B_2\in cl_{L(v_1)}(T)$ and so $f(B_1)\not=f(B_2)$ 
as $f$ is a cluster marker map for $T$ and $\sigma$.
\qquad \end{proof}

\section{Characterizing distinguished minimal topological 
lassos}\label{sec:characterization-distinguished}

In this section, we establish the converse of 
Theorem~\ref{theo: distinguished-lasso-verification}
which allows us to characterize 
distinguished minimal topological lasso of non-degenerate $X$-trees. 
We start with a well-known construction
for associating  an unrooted tree to a 
connected block graph (see e.\,g.\,\cite{diestel}).
Suppose that $G$ is a connected block graph. Then
we denote by $T_G$ the (unrooted) tree associated to $G$
with vertex set $Cut(G)\cup Block(G)$
and whose edges are of the from $\{a,B\}$ where $a\in Cut(G)$,
$B\in Block(G)$ and $a\in B$. Note that
if a vertex
$v\in V(T_G)$ is a leaf of $T_G$ then $v\in Block(G)$.

Suppose $T$ is a non-degenerate $X$-tree and
$\cL$ is a distinguished minimal topological lasso for $T$. 
Let $v$ denote an interior vertex of $T$ whose children are 
$v_1\ldots,v_l$ where $l=|ch(v)|$. Then
 Corollary~\ref{cor:bijection} combined with 
Proposition~\ref{prop:x-i-unique} implies that for all $i\in\langle l\rangle$
there exists a unique leaf $x_i\in L(v_i)$ of $T$ such that, 
for all $i,j\in\langle l\rangle$ distinct,
 $x_ix_j\in \cL$ and 
$\{x_1,\ldots, x_l\}=V(B_v)$. Since $\Gamma(\cL)$ is claw-free, 
every vertex of $B_v$  is contained in at most one further block
of $\Gamma(\cL)$. Thus, if $w\in V(B_v)$ and
$w\in V(B) $ holds too for some block $B\in Block(\Gamma(\cL))$
distinct from $B_v$
then $w$ must be a cut vertex of $\Gamma(\cL)$. For every 
vertex $v'\in \iV(T)$ that is the
child of some vertex $v\in \iV(T)$, we denote the  
unique element $x\in L(v')$ contained
in $V(B_v)$ by $c_{B_{v'}}$ in case $x\in Cut(\Gamma(\cL))$.
Note that it is not difficult to observe that,  in the tree 
$T_{\Gamma(\cL)}$, the vertex $c_{B_{v'}}$ is the vertex 
adjacent with $B_v$ that lies on the path 
from $B_v$ to $B_{v'}$. 

The following result lies at the heart of 
Theorem~\ref{theo:characterization} and establishes a 
crucial relationship between the non-root interior vertices of $T$ 
and the cut vertices of $\Gamma(\cL)$.

\begin{lemma}\label{lem:bijection-theta}
Suppose $T$ is an $X$-tree and $\cL$ is a distinguished minimal 
topological lasso
for $T$. Then  the map 
$$
\theta :\iV(T)-\{\rho_T\} \to Cut(\Gamma(\cL)):\,\,\,v\mapsto c_{B_{v}}
$$ 
is bijective. 
\end{lemma}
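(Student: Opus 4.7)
The plan is to verify that $\theta$ is a bijection by separately establishing (a) well-definedness into $Cut(\Gamma(\cL))$, (b) injectivity, and (c) surjectivity. The single structural observation that drives all three steps is a consequence of claw-freeness: because each block $B_v$ has $|ch(v)| \geq 2$ vertices (by Proposition~\ref{prop:x-i-unique}), three blocks meeting at a common vertex would produce an induced $K_{1,3}$, so every vertex of $\Gamma(\cL)$ lies in at most two blocks. Combined with the bijection $\psi$ of Corollary~\ref{cor:bijection}, which identifies the interior vertices of $T$ with the blocks of $\Gamma(\cL)$, this observation will let one read off everything needed.

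For (a), I would fix $v' \in \iV(T) - \{\rho_T\}$ with parent $u$ and let $x$ denote the unique element of $L(v') \cap V(B_u)$ granted by Proposition~\ref{prop:x-i-unique}. The parent $p$ of $x$ in $T$ is an interior vertex that has $x$ as a leaf child, so $x \in V(B_p)$ automatically. To see that $p \neq u$, I would note that otherwise $x$ would simultaneously be a leaf child of $u$ and lie in $L(v')$, which is impossible as the leaf sets of the distinct children $x$ and $v'$ of $u$ are disjoint. Hence $x$ lies in the two distinct blocks $B_u$ and $B_p$ and is therefore a cut vertex.

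For (b), I would suppose $\theta(v_1') = \theta(v_2') = c$ with $v_1' \neq v_2'$ and respective parents $u_1, u_2$. Since $c \in L(v_1') \cap L(v_2')$, the vertices $v_1', v_2'$ are comparable in $T$; taking $v_1'$ to be a strict ancestor of $v_2'$ without loss of generality already forces $u_1 \neq u_2$. Thus $c$ lies in the two distinct blocks $B_{u_1}$ and $B_{u_2}$, and it also lies in $B_p$, where $p$ now denotes the parent of $c$. The same leaf-child-versus-interior-child disjointness rules out $p \in \{u_1, u_2\}$, so $c$ sits in three distinct blocks, contradicting the "at most two blocks" observation.

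For (c), given $c \in Cut(\Gamma(\cL))$, one block containing $c$ is $B_p$ with $p$ the parent of $c$; by claw-freeness there is exactly one other block containing $c$, say $B_v$ with $v \in \iV(T)$, and $v$ must be a strict ancestor of $p$. Setting $v_0$ to be the child of $v$ on the path from $v$ toward $c$, one has $v_0 \in \iV(T) - \{\rho_T\}$, and Proposition~\ref{prop:x-i-unique} yields $\theta(v_0) = c$. I do not foresee any deep obstacle; the most delicate point is tracking, in step (b), that the third block $B_p$ is truly distinct from both $B_{u_1}$ and $B_{u_2}$ before invoking claw-freeness.
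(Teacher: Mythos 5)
Your proof is correct, but it takes a genuinely different route from the paper's. The paper dispatches well-definedness and injectivity as ``clear'' and then obtains surjectivity by a pure counting argument: it passes to the block--cut tree $T_{\Gamma(\cL)}$, suppresses degree-two vertices, and uses $|Block(\Gamma(\cL))|=|\iV(T)|$ (from Corollary~\ref{cor:bijection}) together with claw-freeness to get $|Cut(\Gamma(\cL))|=|Block(\Gamma(\cL))|-1=|\iV(T)-\{\rho_T\}|$, whence an injection between finite sets of equal size is a bijection. You instead argue element-wise: the key observation that claw-freeness plus $|V(B_v)|=|ch(v)|\geq 2$ forces every vertex of $\Gamma(\cL)$ to lie in at most two blocks (which the paper also records just before the lemma) drives all three of your steps, and in particular your surjectivity argument explicitly exhibits the preimage $v_0$ of a given cut vertex $c$ as the child of the ``upper'' block's $lca$ on the path toward $c$. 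Your version is longer but more informative -- it actually describes $\theta^{-1}$, and it supplies the well-definedness and injectivity details the paper omits (notably that the parent $p$ of the leaf $x$ always contributes a second block $B_p\ni x$ distinct from $B_u$, via the disjointness of leaf sets of distinct children). The paper's version is shorter and sidesteps the case analysis entirely at the cost of being non-constructive. Both arguments rest on the same ingredients (Proposition~\ref{prop:x-i-unique}, Corollary~\ref{cor:bijection}, connectivity of $\Gamma(\cL)$, and claw-freeness), and I see no gap in yours.
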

\begin{proof}
Clearly, $\theta$ is well-defined and injective. To see that $\theta$ is 
bijective
let $ T_{\Gamma(\cL)}^-$ denote the tree obtained from
$ T_{\Gamma(\cL)}$ by suppressing all degree two vertices. Then 
$Block(\Gamma(\cL))=V(T_{\Gamma(\cL)}^-)$ and 
Corollary~\ref{cor:bijection} implies that
$|Block(\Gamma(\cL))|=|\iV(T)|$ 
as $\Gamma(\cL)$ is a block graph. Since
$\Gamma(\cL)$ is claw-free, we clearly also have 
$|Cut(\Gamma(\cL))|=|E(T_{\Gamma(\cL)}^-)|$. Combined with the
fact that f $|V(T')|= |E(T')|+1$ holds for every tree $T'$,
it follows that 
 $|Cut(\Gamma(\cL))|=|Block(\Gamma(\cL))|-1=|\iV(T)|-1=
 |\iV(T)-\{\rho_T\}|$. Thus,
$\theta$ is bijective.
\qquad
\end{proof}

Armed with this result, we are now ready to establish the converse of 
Theorem~\ref{theo: distinguished-lasso-verification} which yields the
aforementioned characterization of distinguished
minimal topological lassos of non-degenerate $X$-trees.

\begin{theorem}\label{theo:characterization}
Suppose $T$ is a non-degenerate $X$-tree 
and $\cL$ is a set of cords of $X$. Then 
$\cL$ is a distinguished minimal topological lasso
for $T$ if and only if there exists a total ordering $\sigma$ of $X$ and
a cluster marker map $f$ for $T$ and $\sigma$ 
such that $\cL_{(T,f)}=\cL$.
\end{theorem}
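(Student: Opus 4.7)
The plan is as follows. The ``if'' direction is immediate from Theorem~\ref{theo: distinguished-lasso-verification}, so I concentrate on the ``only if'' direction. Given a distinguished minimal topological lasso $\cL$ for $T$, I propose to read off a candidate cluster marker map from the block structure of $\Gamma(\cL)$ and then construct an ordering $\sigma$ that certifies the candidate.

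First, I will set $f(L(v)):=c_{B_v}$ for every $v\in \iV(T)-\{\rho_T\}$. Because $T$ has no degree-two vertex other than possibly $\rho_T$, distinct non-root interior vertices have distinct leaf sets, so this prescription genuinely defines a map $f:cl(T)\to X$. Lemma~\ref{lem:bijection-theta} then guarantees that $f$ is injective with image exactly $Cut(\Gamma(\cL))$.

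Next I will choose $\sigma$ to be any total ordering of $X$ in which every cut vertex of $\Gamma(\cL)$ precedes every non-cut leaf, and, among the cut vertices, $c_{B_v}$ precedes $c_{B_w}$ whenever $v$ is strictly deeper in $T$ than $w$ (ties at equal depth broken arbitrarily). To show that $f$ is then a cluster marker map for $T$ and $\sigma$, I will fix $A=L(v)$ with $v\in \iV(T)-\{\rho_T\}$ and analyse the set $A\setminus\{f(B):B\in cl_A(T)\}$. Any element of this set is either $c_{B_v}$, a non-cut leaf of $L(v)$, or a marker $c_{B_w}$ whose cluster $L(w)$ meets $L(v)$; compatibility of leaf sets forces such a $w$ to be comparable with $v$ in $T$, and the strict descendants of $v$ are explicitly excluded, so the only surviving markers besides $c_{B_v}$ itself belong to strict ancestors of $v$. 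These come strictly after $c_{B_v}$ in $\sigma$ by the depth rule, and the non-cut leaves come after every cut vertex, so $c_{B_v}=\min_\sigma(A\setminus\{f(B):B\in cl_A(T)\})$, as required. The case $cl_A(T)=\emptyset$ (i.e.\ $v$ is the parent of a pseudo-cherry) is covered by the same analysis.

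Finally, I will argue that $\cL_{(T,f)}=\cL$. By Proposition~\ref{prop:x-i-unique}, for every $v\in \iV(T)$ the block $B_v$ consists of one distinguished leaf $x_i\in L(v_i)$ for each child $v_i$ of $v$, and $x_ix_j\in \cL$ for all distinct $i,j$. Unpacking the definitions, $x_i=v_i$ when $v_i$ is a leaf, while $x_i=c_{B_{v_i}}=f(L(v_i))$ when $v_i$ is interior; matching this against the definition of $\cL_{(T,f)}(v)$ yields $\cL_{(T,f)}(v)\subseteq \cL$ for every $v\in \iV(T)$, and hence $\cL_{(T,f)}\subseteq \cL$. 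Since both sides are minimal topological lassos for $T$ and therefore share the common cardinality $\sum_{v\in \iV(T)}\binom{|ch(v)|}{2}$ noted at the end of Section~\ref{sec:distinguished}, this containment upgrades to equality.

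The hardest step, I anticipate, will be the verification that $f$ is a cluster marker map. The subtle point is that no ``intruder'' from an incomparable part of $T$ can sneak into $L(v)$ and undercut $c_{B_v}$: compatibility of leaf sets, together with the claw-freeness of $\Gamma(\cL)$ (which is hidden inside Lemma~\ref{lem:bijection-theta}), controls exactly which cut vertices can appear in $L(v)$, and then the depth-decreasing ordering of cut vertices does the rest.
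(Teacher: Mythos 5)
Your proposal is correct and follows essentially the same route as the paper: the paper also sets $f(A)=c_{B_{\mathrm{lca}_T(A)}}$ (via the bijection $\theta$ of Lemma~\ref{lem:bijection-theta}), orders the cut vertices by decreasing depth ahead of the remaining leaves, verifies the cluster marker property exactly as you do, and finishes with the cardinality argument from Lemma~\ref{lem:size-A(v)}. The only cosmetic difference is that you prove $\cL_{(T,f)}\subseteq\cL$ via Proposition~\ref{prop:x-i-unique} while the paper proves $\cL\subseteq\cL_{(T,f)}$ by a case analysis on the cords of $\cL$; both containments upgrade to equality the same way.
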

 \begin{proof}
Assume first that $\sigma$ is some total ordering of $X$ and that 
$f:cl(T)\to X$ is a cluster marker map for  $T$ and $\sigma$. Then, by
Theorem~\ref{theo: distinguished-lasso-verification}, $\cL_{(T,f)}$
is a distinguished minimal topological lasso for $T$.

Conversely assume that $\cL$ is a distinguished minimal topological lasso
for $T$ and consider an embedding of $T$ into the plane. 
By abuse of terminology,  we will refer to this embedding of $T$ also as $T$.
We start with defining a total ordering $\sigma$ of $X$.
To this end, we first define a map $t:\iV(T)-\{\rho_T\}\to \mathbb N$
by setting, for all $v\in \iV(T)-\{\rho_T\}$, 
 $t(v)$ to be the length of the path from
$\rho_T$ and $v$. Put $h=\max\{t(v)\,:\, v\in \iV(T)-\{\rho_T\}\}$ 
and note that $h\geq 1$ as $T$ is non-degenerate. 
Starting at the left most interior vertex $v$ of $T$ for which $t(v)=h$ 
holds and moving, for all $l \in \langle h\rangle$, from left to right, 
we enumerate
all interior vertices of $T$ but the root. We next put $n=|X|$ and
$X=\langle n\rangle$
and relabel the elements in $X$ such that when traversing the circular ordering
induced by $T$ on $X\cup\{\rho_T\}$ in a counter-clockwise fashion we have
$\rho_T,1,2,3,\ldots, n,\rho_T$. To reflect this with regards to
$\cL$, we relabel the elements of the
cords in $\cL$ accordingly
and denote the resulting distinguished minimal topological lasso for
$T$ also by $\cL$.

By Lemma~\ref{lem:bijection-theta},
the map $\theta :\iV(T)-\{\rho_T\} \to Cut(\Gamma(\cL))$ 
defined in that lemma is bijective. Put $m=|Cut(\Gamma(\cL))|$ and
let $v_1,v_2,\ldots, v_m$ denote the enumeration of
the vertices in $\iV(T)-\{\rho_T\}$ obtained above. Also, 
set $Y=X-\{\theta(v_i): i\in \langle m\rangle\}$. Let 
 $y_1,y_2,\ldots, y_l$ denote an arbitrary but fixed total
 ordering of the elements 
of $Y$ where $l=|Y|$. Then we define $\sigma$ to be the total ordering of $X$
given by 
$$
\sigma:\,\, \theta(v_1),\theta(v_2),\ldots, \theta(v_{i-1}),
\theta(v_i),\theta(v_{i+1}),
,\ldots, \theta(v_m),y_1,y_2,\ldots, y_l
$$ 
where $\theta(v_1)$ is the minimal element and $y_l$ is the maximal element. 
Note that if $v\in \iV(T)$ is the 
parent of a pseudo-cherry $C$ of $T$ then $\theta(v)=\min_{\sigma}C$.

We briefly interrupt the proof of the theorem 
to illustrate these definitions
by means of an example. Put $X=\langle 13\rangle$ and consider the $X$-tree
$T$ depicted in  Fig.~\ref{fig:illustration-main-theorem}(i) (ignoring the
labelling of the interior vertices for the moment) and the distinguished
minimal topological lasso $\cL$ for $T$ pictured in the form of 
$\Gamma(\cL)$ in Fig.~\ref{fig:illustration-main-theorem}(ii). Then
the labelling of the interior vertices of $T$ gives the 
enumeration of those vertices considered in the
proof of Theorem~\ref{theo:characterization}. The total ordering $\sigma$
of $X$ restricted to the elements in $\{\theta(v_1),\ldots, \theta(v_6)\}$
is $3,5,12,1,10,7$. 
\begin{figure}[h]
\begin{center}
\input{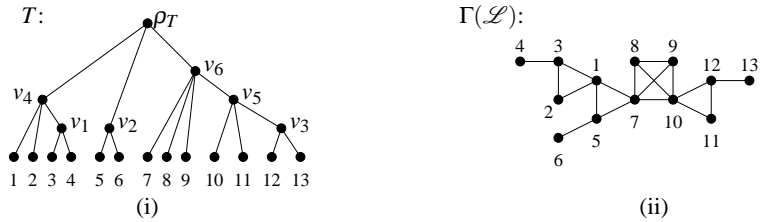}
\end{center}
\caption{\label{fig:illustration-main-theorem}
For  $X=\langle 13\rangle$ and the depicted $X$-tree $T$,
the enumeration of the interior vertices of $T$ 
considered in the proof of Theorem~\ref{theo:characterization} is indicated
in (i). With regards to this enumeration and the distinguished minimal 
topological lasso $\cL$ for $T$ pictured in the form of $\Gamma(\cL)$
in (ii), the total ordering $\sigma$ of $X$ considered in
that proof restricted to the elements in $\{\theta(v_1),\ldots, \theta(v_6)\}$
is $3,5,12,1,10,7$.
}
\end{figure}

Returning to the proof of the theorem, 
we claim that the map $f:cl(T)\to X$ given,
for all $A\in cl(T)$, by setting $f(A)=\theta(lca(A))$ is a cluster
marker map for $T$ and $\sigma$ where for all such $A$ we
put $lca(A)=lca_T(A)$. Indeed, suppose $A\in cl(T)$.
Then $\theta(lca(A))=c_{B_{lca(A)}}\in L(lca(A))$ holds by construction.
We distinguish between the cases that $cl_A(T)\not =\emptyset$ and that
$cl_A(T)=\emptyset$. If $cl_A(T)\not =\emptyset$
then since $\theta$ is bijective it follows that
 $\theta(lca(A))\not=\theta(v)$ holds for all descendants
 $v\in \iV(T)$ of $lca(A)$.
Combined with the definition of $\sigma$, we obtain  
$f(A)=\theta(lca(A))=\min_{\sigma}(A-\{\theta(lca(D))\,:\, D\in cl_A(T)\})=
\min_{\sigma}(A-\{f(D)\,:\, D\in cl_A(T)\})$, as required. 
If $cl_A(T)=\emptyset$ then, as was observed above,
 $f(A)=\theta(lca(A))=\min_{\sigma}A$. Thus,
$f$ is a cluster marker map for $T$ and $\sigma$, as claimed.

It remains to show that $\cL_{(T,f)}=\cL$. To see this note first
that, by Theorem~\ref{theo: distinguished-lasso-verification}, 
$\cL_{(T,f)}$ is a distinguished minimal topological lasso for $T$. Since
Lemma~\ref{lem:size-A(v)} implies that any
two minimal topological lasso for $T$ must be of
the same size and thus $|\cL_{(T,f)}|=|\cL|$ holds, it therefore
suffices to show that $\cL\subseteq \cL_{(T,f)}$. Suppose $a,b\in X$ distinct
 such that $ab\in \cL$.
Then there exists some interior vertex $v\in \iV(T)$ such that $v=lca_T(a,b)$.
Hence, $a,b\in V(B_v)$. We claim that $ab\in \cL_{(T,f)}(v)$.
To establish this claim,
we  distinguish between the cases that (i) $a\in ch(v)$ and (ii) that
$a\not\in ch(v)$.

Assume first that Case (i) holds, that is, $a$ is a child of $v$.
If $b\in ch(v)$ then the claim is an immediate consequence 
of the definition of $\cL_{(T,f)}(v)$. So assume that 
$b\not\in ch(v)$. Let $v'\in \iV(T)$ denote the child of $v$ for which
$b\in L(v)$ holds. Then $b=c_{B_{v'}}=\theta(v')=f(L(v'))$ follows by the 
observation preceding Lemma~\ref{lem:bijection-theta} combined 
with the fact that $b\in V(B_v)$. Hence,
$ab=af(L(v'))\in \cL_{(T,f)}(v)$, as claimed.

Assume next that Case (ii) holds, that is, $a$ is not a child of $v$.
In view of the previous subcase it suffices to consider the case that 
$b\not\in ch(v)$. Let $v',v''\in \iV(T)$ denote the
children of $v$ such that $a\in L(v')$ and $b\in L(v'')$. 
Then, again by the 
observation preceding Lemma~\ref{lem:bijection-theta} 
combined with the fact that $a,b\in V(B_v)$, we have
$a=c_{B_{v'}}=\theta(v')=f(L(v'))$ 
and $b=c_{B_{v''}}=\theta(v'')=f(L(v''))$ and so
$ab=f(L(v'))f(L(v''))\in \cL_{(T,f)}(v)$ follows, as claimed.
 This concludes the proof
of the claim and thus the proof of the theorem.
\qquad 
\end{proof}

We now take a brief break from our study of distinguished minimal 
topological lassos to point out a sufficient condition
for a set of cords to be a strong lasso for some 
$X$-tree which is implied
by Theorem~\ref{theo:characterization}. To make this
more precise, we need to introduce some more terminology
from \cite{HP13}. Suppose $T$ is an $X$-tree and  
$\cL$ is a set of cords of $X$.
Then $\cL$ is called an {\em equidistant lasso} for $T$ if, for all 
equidistant, proper edge-weightings $\omega$ and $\omega'$ of
$T$, we have that $\omega=\omega'$ holds whenever
$(T,\omega)$ and $(T,\omega')$ are $\cL$-isometric. Moreover, $\cL$
is called a {\em strong lasso} for $T$ if $\cL$  is simultaneously
an equidistant and a topological lasso for $T$ (see 
\cite{DHS11} for more
on such lassos in the unrooted case). 

Like a topological lasso for a $X$-tree $T$,
an equidistant lasso $\cL$ for $T$ 
can also be characterized in terms of a property of the 
child-edge graph $G(\cL,v)$ associated to $T$ and $\cL$
where $v\in \iV(T)$. Namely, a set 
$\cL$ of cords of $X$ is an equidistant lasso for an $X$-tree $T$
if and only if, for every vertex $v\in \iV(T)$, the graph
$G(\cL,v)$ has at least one edge (see \cite[Theorem 6.1]{HP13}).
Since for $\sigma$ some total ordering of $X$ and
$f:\iV(T)-\{\rho_T\}\to X$ a cluster marker map for $T$ and $\sigma$
the graphs $G(\cL_{(T,f)},v)$ clearly satisfy this property
for all $v\in \iV(T)$, it follows that $\cL_{(T,f)}$ is also
an equidistant lasso for $T$ and thus a strong lasso for $T$. Defining 
a strong lasso $\cL$ of an $X$-tree  to be {\em minimal} in analogy 
to when a topological lasso is minimal, 
Theorem~\ref{theo:characterization} implies

\begin{corollary}\label{corollary:strong-lasso-characterization}
Suppose $T$ is a non-degenerate $X$-tree, 
$\cL$ is a set of cords of $X$, $\sigma$
is a total ordering  of $X$, and  $f:cl(T)\to X$
is a cluster marker map for $T$ and $\sigma$. 
Then $\cL_{(T,f)}$ is a minimal strong lasso for $T$.
\end{corollary}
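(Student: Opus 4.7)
The plan rests on the observation that minimality as a strong lasso follows automatically from minimality as a topological lasso, since the strong lasso property is a strengthening of the topological lasso property (dropping a cord can only weaken the property, so if we cannot drop a cord while keeping the weaker property, we certainly cannot drop one while keeping the stronger one). Concretely, I would proceed in three short steps.

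First, I invoke Theorem \ref{theo: distinguished-lasso-verification} to conclude that $\cL_{(T,f)}$ is a distinguished minimal topological lasso for $T$; in particular, it is a topological lasso for $T$.

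Second, to see that $\cL_{(T,f)}$ is also an equidistant lasso for $T$, I appeal to the characterization from \cite[Theorem 6.1]{HP13}, which reduces the task to checking that $G(\cL_{(T,f)}, v)$ has at least one edge for every $v\in\iV(T)$. Unpacking the definition of $\cL_{(T,f)}(v)$, together with the remark immediately after that definition that $|\cL_{(T,f)}(v)|\geq 1$ for every $v\in\iV(T)$, each such $v$ is the last common ancestor of at least one cord in $\cL_{(T,f)}$; this cord contributes an edge joining two distinct child edges of $v$ in the child-edge graph $G(\cL_{(T,f)},v)$. (In fact, since $\cL_{(T,f)}$ is already a topological lasso, Theorem \ref{theo:characterization-topology} gives that $G(\cL_{(T,f)},v)$ is a clique, which is more than enough.) Combined with Step~1, this shows that $\cL_{(T,f)}$ is simultaneously a topological lasso and an equidistant lasso, i.e., a strong lasso for $T$.

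Third, for minimality as a strong lasso, suppose for contradiction that some cord $A\in\cL_{(T,f)}$ could be removed while keeping the strong lasso property. Then $\cL_{(T,f)}-\{A\}$ would, in particular, still be a topological lasso for $T$, contradicting the minimality of $\cL_{(T,f)}$ as a topological lasso established in Step~1. Hence $\cL_{(T,f)}$ is a minimal strong lasso, as claimed. I do not anticipate any substantive obstacle here; the only careful point is keeping the two notions of minimality distinct and recording the easy one-way implication between them.
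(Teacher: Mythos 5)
Your proposal is correct and follows essentially the same route as the paper: the paper likewise verifies the equidistant-lasso property via the characterization that every child-edge graph $G(\cL_{(T,f)},v)$ has at least one edge, combines this with Theorem~\ref{theo: distinguished-lasso-verification} (via Theorem~\ref{theo:characterization}) to get a strong lasso, and obtains minimality from the one-way implication you note. The only difference is that you spell out the minimality step, which the paper leaves implicit.
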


\section{Heredity of distinguished minimal 
topological lassos}\label{sec:subtree}

In this section, we turn our attention to the problems of characterizing when
a distinguished minimal topological lasso of an $X$-tree $T$ induces
a distinguished minimal topological lasso for a subtree of $T$ and, conversely
when distinguished minimal topological lassos of $X$-trees can be
combined to form a distinguished minimal topological lasso of a supertree
for those trees (see e.\,g.\,\cite{BE00} for more on such trees). This 
will also allow us to partially answer the rooted analogon of
a question raised in \cite{DHS11} for supertrees within the unrooted framework.
To make this more precise, we require further terminology.
Suppose $\cL$ a set of cords of $X$ and $Y\subseteq X$ is a 
non-empty subset. Then we set 
$$
\cL|_Y=\{ab\in\cL\,:\, a,b\in Y\}.
$$
Clearly, $\Gamma(\cL|_Y)$ is the
subgraph of $\Gamma(\cL)$ induced by $Y$ but
 $Y=\bigcup \cL|_Y$ need not hold. Moreover, if 
 $\cL$ is a minimal topological lasso for 
an $X$-tree $T$ and $|Y|\geq 3$ such that every interior vertex
of $T$ is also an interior vertex of $T|_Y$ then
 Theorem~\ref{theo:characterization-topology} implies that
$\cL|_Y$ is a minimal topological lasso for $T|_Y$. In particular, 
$\Gamma(\cL|_Y)$ must be connected in this case. 
The next result is a strengthening
of this observation. 

\begin{theorem}\label{theo:subtree}
Suppose $T$ is an $X$-tree, $\cL$ is a distinguished minimal
topological lasso for $T$, and $Y\subseteq X$ is a subset of size 
at least three. Then  $\cL|_Y$ is a distinguished
minimal topological lasso for $T|_Y$ if and only if $\Gamma(\cL|_Y)$
is connected.
\end{theorem}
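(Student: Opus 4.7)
The forward direction is immediate from Proposition~\ref{prop:gamma-l-connected} applied to $T|_Y$ and $\cL|_Y$.

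For the reverse direction, assume $\Gamma(\cL|_Y)$ is connected. Since $\Gamma(\cL|_Y)$ is the subgraph of $\Gamma(\cL)$ induced by $Y$, and since both claw-freeness and the property of being a block graph are hereditary under taking induced subgraphs (the latter because block graphs are characterised by the absence of induced cycles of length at least four and of the diamond $K_4-e$), $\Gamma(\cL|_Y)$ is a connected claw-free block graph. It therefore suffices to show that $\cL|_Y$ is a minimal topological lasso for $T|_Y$.

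The main step is the following key claim: for every $v\in\iV(T|_Y)$ and every child $z$ of $v$ in $T$ with $L_T(z)\cap Y\neq\emptyset$, the representative of $z$ in $B_v$ -- that is, $z$ itself when $z$ is a leaf of $T$ and $\theta(z)$ when $z\in\iV(T)$ (with $\theta$ as in Lemma~\ref{lem:bijection-theta}) -- lies in $Y$. The leaf case is immediate. For $z$ interior, I would argue by contradiction: assume $\theta(z)\notin Y$. Since $\Gamma(\cL)$ is claw-free and $\theta(z)\in Cut(\Gamma(\cL))$, the vertex $\theta(z)$ belongs to exactly two blocks of $\Gamma(\cL)$, namely $B_v$ and $B_p$, where $p\in\iV(T)$ is the $T$-parent of the leaf $\theta(z)$ (note $p$ equals $z$ or is a descendant of $z$ in $T$, so $B_p\neq B_v$). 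Removing $\theta(z)$ from $\Gamma(\cL)$ then splits the graph into two connected components $V_I\supseteq V(B_v)-\{\theta(z)\}$ and $V_{II}\supseteq V(B_p)-\{\theta(z)\}$. For every other child $z'$ of $v$ in $T$, the representative of $z'$ in $B_v$ lies in $L_T(z')\cap(V(B_v)-\{\theta(z)\})\subseteq V_I$, while $\Gamma_{z'}(\cL)$ is connected by Proposition~\ref{prop:gamma-l-connected} and avoids $\theta(z)\in L_T(z)$; hence $L_T(z')\subseteq V_I$. Symmetrically, inside the block graph $\Gamma_z(\cL)$ the vertex $\theta(z)$ belongs to the single block $B_p$ (since the only vertex of $V(B_v)$ that lies in $L_T(z)$ is $\theta(z)$ itself), so $\theta(z)$ is not a cut vertex of $\Gamma_z(\cL)$; thus $L_T(z)-\{\theta(z)\}$ stays connected in $\Gamma(\cL)-\theta(z)$ and, meeting $V(B_p)-\{\theta(z)\}$, is contained in $V_{II}$. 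Combining $\emptyset\neq L_T(z)\cap Y\subseteq V_{II}$ with $\emptyset\neq L_T(z'')\cap Y\subseteq V_I$ for some other child $z''$ of $v$ in $T$ (guaranteed by $v\in\iV(T|_Y)$), we conclude that $\Gamma(\cL|_Y)\subseteq\Gamma(\cL)-\theta(z)$ meets both components of the latter and is therefore disconnected, contradicting our hypothesis. The subtle point -- and the main obstacle -- is precisely the verification that $L_T(z)-\{\theta(z)\}$ and the $L_T(z')$'s sit on opposite sides of the cut vertex $\theta(z)$.

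Granted the key claim, for any $v\in\iV(T|_Y)$ and any two distinct children $w_1,w_2$ of $v$ in $T|_Y$ with corresponding children $z_1,z_2$ of $v$ in $T$, the representatives $r_i$ of $z_i$ in $B_v$ lie in $L_T(z_i)\cap Y=L_{T|_Y}(w_i)$; hence the cord $r_1r_2\in\cL|_Y$ witnesses the edge of $G_{T|_Y}(\cL|_Y,v)$ joining the child edges to $w_1$ and $w_2$. By Theorem~\ref{theo:characterization-topology}, $\cL|_Y$ is a topological lasso for $T|_Y$. A direct count using the key claim gives $|\cL|_Y|=\sum_{v\in\iV(T|_Y)}\binom{|ch_{T|_Y}(v)|}{2}$, the minimum size identified at the end of Section~\ref{sec:distinguished}; hence $\cL|_Y$ is a minimal topological lasso for $T|_Y$, and together with the first step a distinguished one.
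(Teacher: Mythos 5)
Your argument is correct, but it takes a genuinely different route from the paper's. The paper deduces the converse direction from Theorem~\ref{theo:characterization}: it writes $\cL=\cL_{(T,f_{\omega})}$ for a cluster marker map $f_{\omega}$, shows that the restriction of the ordering to $Y$ together with $A\mapsto f_{\omega}(A^T)$ yields a cluster marker map for $T|_Y$, and then proves $\cL|_Y=\cL_{(T|_Y,f_{\sigma})}$ (again via the size bound of Lemma~\ref{lem:size-A(v)} plus one inclusion). You bypass the characterization entirely: you get the claw-free block graph property of $\Gamma(\cL|_Y)$ for free from heredity of these properties under induced subgraphs, verify the clique condition of Theorem~\ref{theo:characterization-topology} for each child-edge graph of $T|_Y$ directly, and obtain minimality by counting against the lower bound $\sum_{v\in\iV(T|_Y)}\binom{|ch_{T|_Y}(v)|}{2}$. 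Interestingly, the crux is the same in both proofs: one must show that the cut vertex $\theta(z)$ representing a relevant subtree lies in $Y$, which the paper asserts rather tersely ("the form of the cut-vertices combined with connectedness of $\Gamma(\cL|_Y)$") and which you establish by an explicit two-component separation argument at $\theta(z)$; your version is more careful and, in fact, more accurate in scope — the paper's blanket claim that $f_{\omega}(A)\in Y$ for \emph{every} $A\in cl(T)$ meeting $Y$ is not literally true (it can fail for clusters strictly above $lca_T(Y)$ or above a suppressed chain), whereas your claim is restricted to exactly the children of vertices of $\iV(T|_Y)$, which is all that is needed. The only point you leave implicit is the final count: one must also observe that no cord of $\cL$ whose last common ancestor in $T$ lies outside $\iV(T|_Y)$ can have both endpoints in $Y$ (such a vertex has at most one child whose leaf set meets $Y$, and both endpoints of such a cord are representatives lying in distinct children's leaf sets); with that routine remark the count is exact and the proof is complete.
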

\begin{proof}
Assume first that $\cL|_Y$ is a distinguished minimal topological
lasso for $T|_Y$. Then, by Proposition~\ref{prop:gamma-l-connected},
$\Gamma(\cL|_Y)$ is connected.

Conversely, assume that $\Gamma(\cL|_Y)$ is connected. Then the statement
clearly holds if $T$ is the star tree on $X$. So assume that $T$ is 
non-degenerate. Let $Y\subseteq X$ be of size at least three
and assume first that $T|_Y$ is the star tree on $Y$. We claim that 
 $\Gamma(\cL|_Y)$ is a clique. Assume to the contrary that
this is not the case, that is, there exist elements
$y,y'\in Y$ distinct such that $yy'\not\in \cL$. Since $\Gamma(\cL|_Y)$
is connected, there must exist a path $P:x_1=y,x_2,\ldots,x_l=y'$,
$l\geq 2$, in $\Gamma(\cL|_Y)$ from $y$ to $y'$. Since 
the vertex set of $\Gamma(\cL|_Y)$ is $Y$, it follows that
$X'=\{x_1,x_2,\ldots,x_l\}\subseteq Y$. Combined with the 
fact that $lca_T(x,x')=lca_T(Y)$ holds for all $x,x'\in X'$ distinct
as $T|_Y$ is a star tree on $Y$, we obtain $X'\subseteq V(B_{lca_T(Y)})$.
Thus, $yy'\in \cL$ which is impossible and thus proves the claim.
That $\cL|_Y$ is a distinguished minimal topological lasso for $T|_Y$
is a trivial consequence.

So assume that $T|_Y$ is
non-degenerate. Since $\cL$ is a distinguished minimal topological
lasso for $T$, 
Theorem~\ref{theo:characterization} implies that 
there exists a total ordering $\omega$ of $X$ and a cluster
marker map $f_{\omega}: cl(T)\to X$ for $T$ and $\omega$ such
that $\cL=\cL_{(T,f_{\omega})}$. Moreover, Lemma~\ref{lem:insights}(iv) 
implies that the cut-vertices
of $\Gamma(\cL)$ are of the form $f_{\omega}(L_T(v))$ where $v\in \iV(T)$.

To see that $\cL|_Y$ is a distinguished minimal topological lasso
for $T|_Y$ and some total ordering of $Y$ note first that 
the restriction $\sigma$ of $\omega$ to $Y$ induces 
a total ordering of $Y$.
Furthermore, the aforementioned form
of the cut-vertices of $\Gamma(\cL)$ 
combined with the assumption that $\Gamma(\cL|_Y)$ is connected
implies that, for all $A\in cl(T)$ with 
$A\cap Y\not=\emptyset$, we must have
$f_{\omega}(A)\in Y$. For all $A\in cl(T|_Y)$ denote by $A^T$ the 
set-inclusion minimal superset of $A$ contained in $cl(T)$. 
Then since $f_{\omega}$ is a cluster marker map for $T$
and $\omega$ it follows that
the map 
$$
f_{\sigma}:cl(T|_Y)\to Y\,:\, A\mapsto f_{\omega}(A^T)
$$
 is a cluster marker map for $T|_Y$ and $\sigma$.
By Theorem~\ref{theo:characterization} it now suffices to establish that
$\cL|_Y=\cL_{(T|_Y,f_{\sigma})}$. Since both $\cL|_Y$ and
$\cL_{(T|_Y,f_{\sigma})}$ are minimal topological lassos for 
$T|_Y$ and so $|\cL|_Y|=|\cL_{(T|_Y,f_{\sigma})}|$ is implied 
by Lemma~\ref{lem:size-A(v)} it suffices
to show that $\cL|_Y\subseteq \cL_{(T|_Y,f_{\sigma})}$.

Suppose $ab\in \cL|_Y$, that is,  $ab\in\cL$ and $a,b\in Y$. Since
$Y$ is the leaf set of $T|_Y$, there must exist a vertex $v\in \iV(T|_Y)$
such that $v=lca_{T|_Y}(a,b)$. Clearly, $v\in \iV(T)$. If $a$ and $b$
are both adjacent with $v$ in $T$ then $a$ and $b$ are also adjacent with $v$
in $T|_Y$. Thus $ab\in\cL_{(T|_Y,f_{\sigma})}(v)$ in this case.
So assume that at least one of $a$ and $b$ is not adjacent with  $v$ in $T$.
Without loss of generality  let $a$ denote that vertex.
Then since $ab\in \cL= \cL_{(T,f_{\omega})}$, it follows that there 
must exist a unique child $v'\in \iV(T)$
of $v$ such that $a\in L_T(v')$ and $a=f_{\omega}(L_T(v'))$. Hence, $a\in V(B_v)$
and a cut-vertex of $\Gamma(\cL)$. 

We claim that $v'\in \iV(T|_Y)$. Assume for contradiction that 
$v'\not\in \iV(T|_Y)$. Then since $f_{\omega}$ is a 
cluster marker map for $T$ and and $\omega$, it follows that 
$a$ cannot be a cut vertex in $\Gamma(\cL|_Y)$.
Since $\Gamma(\cL)$ is a claw-free block graph, no edge in
the unique block $B'\in Block(\Gamma(\cL))-\{B_v\}$ that 
also contains $a$ in its vertex set can therefore be incident with $a$
in $\Gamma(\cL|_Y)$. Since $\Gamma(\cL|_Y)$ is assumed to be 
connected, to obtain the required contradiction it now 
suffices to show that there exists some $c\in Y\cap L_T(v')$
distinct from $a$
such that every path from $c$ to $b$ in $\Gamma(\cL)$ crosses $a$.
But this is a consequence of the facts that $v$ is not the parent of $a$ 
in $T|_Y$ and, implied by Proposition~\ref{prop:gamma-l-connected}, that
the subgraph $\Gamma_{v'}(\cL)$ of $\Gamma(\cL)$ induced
by $L_T(v')$ is the connected component of $\Gamma(\cL)$ containing
$a$ obtained from $\Gamma(\cL)$ by deleting all edges in $B_v$ that
are incident with $a$. This concludes the proof of the claim

To conclude the proof of the theorem, note that if 
 $b$ is adjacent with $v$ in $T|_Y$ then
$ab= f_{\omega}(L_T(v'))b=f_{\omega}((L_{T|_Y}(v'))^T)b=
f_{\sigma}(L_{T|_Y}(v'))b\in\cL_{(T|_Y,f_{\sigma})}(v)
\subseteq \cL_{(T|_Y,f_{\sigma})}$.  If $b$ is not adjacent with $v$ in $T|_Y$
then there exists a child $v''\in \iV(T)$ of $v$ such that 
$b=f_{\omega}(L_T(v''))$.
In view of the previous claim,  we have $v''\in \iV(T|_Y)$.
But now arguments similar to the ones used before imply that
$ab\in \cL_{(T|_Y,f_{\sigma})}(v)\subseteq \cL_{(T|_Y,f_{\sigma})}$.
\qquad
\end{proof}

We now turn our attention to supertrees which are formally defined as
follows. Suppose  $\mathcal T=\{T_1,\ldots, T_l\}$, $l\geq 1$, is a
set of $Y_i$-trees $T_i$ with $Y_i\subseteq X$ and $|Y_i|\geq 3$, 
$i\in\langle l\rangle$,
and $T$ is an $X$-tree. Then $T$ is a called a {\em supertree} 
of $\mathcal T $ if
$T$ displays every tree in $\mathcal T$
where we say that some $X$-tree $T$ {\em displays} some $Y$-tree
$T'$ for $Y\subseteq X$ with $|Y|\geq 3$ if
$T|_Y$ and $T'$ are equivalent. More precisely, we have
the following result which relies on the fact that  
in case $\cL$ is a distinguished
minimal topological lasso for a {\em binary} $X$-tree 
$T$, that is, every vertex of $T$ but the leaves has two children, 
 $\Gamma(\cL)$ must be a path. In particular, $\cL$ induces a 
total ordering of the elements in $X$ in this case.
For $Y\subseteq X$ a non-empty subset of $X$,
we denote the maximal and minimal element in $Y$ with regards
to that ordering by $\min_{\cL}(Y)$ and $\max_{\cL}(Y)$, respectively.

\begin{corollary}\label{cor:supertree}
Suppose $X'$ and $X''$ are two non-empty subsets 
of $X$ such that $X=X'\cup X''$ and  $X'\cap X''\not=\emptyset$
 and $T'$ and $T''$ are $X'$-trees and $X''$-tree, respectively. Suppose 
also that $\cL'$ and $\cL''$  are distinguished minimal topological 
lassos for $T'$ and $T''$, respectively, such that 
$\cL'|_{X'\cap X''} =\cL''|_{X'\cap X''}$ and $\Gamma(\cL''|_{X'\cap X''})$
is connected.
If $T$ is a binary $X$-tree that displays both $T'$ and $T''$ 
then $\cL=\cL'\cup\cL''$ is a distinguished minimal topological 
lasso for $T$ if and only if
$\min_{\cL'}(X'\cap X'')\in \{\min_{\cL'}(X'), \min_{\cL''}(X'')\}$
and $\max_{\cL'}(X'\cap X'')\in \{\max_{\cL'}(X'), \max_{\cL''}(X'')\}$.
\end{corollary}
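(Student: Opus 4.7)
The plan is to reduce the corollary to a gluing problem for two Hamiltonian paths, exploiting the path-structure of distinguished minimal topological lassos for binary trees. I first observe that, since $T$ is binary, every interior vertex has exactly two children, so by Proposition~\ref{prop:x-i-unique} every block $B_v^{\cL}$ contains exactly two vertices, i.e.\ is a single edge. A connected block graph whose blocks are all edges is a tree, and a claw-free tree is a path. Hence, for a binary $X$-tree $T$, a set $\cL$ is a distinguished minimal topological lasso for $T$ if and only if $\Gamma(\cL)$ is a Hamiltonian path on $X$ and supplies, for every $v\in \iV(T)$, a cord with endpoints in the two child subtrees of $v$ (Theorem~\ref{theo:characterization-topology}). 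Since $T$ is binary, so are $T|_{X'}$ and $T|_{X''}$, which are equivalent to $T'$ and $T''$; therefore $\Gamma(\cL')$ and $\Gamma(\cL'')$ are Hamiltonian paths $P'$ and $P''$ on $X'$ and $X''$, and the hypothesis $\cL'|_{X' \cap X''}=\cL''|_{X' \cap X''}$ together with the connectedness of $\Gamma(\cL''|_{X' \cap X''})$ identifies the shared induced subgraph $P_{\cap}$ as a common subpath of both $P'$ and $P''$.

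I next fix an orientation of $P_\cap$ lifted consistently to $P'$ and $P''$, and set $\alpha=\min_{\cL'}(X' \cap X'')$ and $\beta=\max_{\cL'}(X' \cap X'')$, the two endpoints of $P_\cap$. Counting edges gives
\[
|E(P'\cup P'')| \;=\; (|X'|-1)+(|X''|-1)-(|X' \cap X''|-1) \;=\; |X|-1,
\]
and $P' \cup P''$ is connected because both pieces are connected and share $P_\cap$. Every vertex outside $\{\alpha,\beta\}$ automatically has degree $\leq 2$ in $P'\cup P''$, because a vertex strictly between $\alpha$ and $\beta$ in $P_\cap$ already has its two $P'$-neighbours coinciding with its two $P''$-neighbours. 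Consequently $\Gamma(\cL)=P'\cup P''$ is a Hamiltonian path on $X$ if and only if $\alpha$ and $\beta$ also have degree $\leq 2$ in the union, which at $\alpha$ amounts to saying that at most one of $P'$, $P''$ extends past $\alpha$ on the side opposite to $\beta$; this is precisely $\alpha \in \{\min_{\cL'}(X'),\min_{\cL''}(X'')\}$, and symmetrically for $\beta$. This establishes the forward direction and shows that under the stated conditions $\Gamma(\cL)$ is a Hamiltonian path on $X$.

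For the harder half --- checking that this Hamiltonian path really is a \emph{topological} lasso for $T$ --- I would invoke Theorem~\ref{theo:characterization}. Write $\cL'=\cL_{(T',f')}$ and $\cL''=\cL_{(T'',f'')}$ for cluster-marker maps arising from orderings $\sigma'$ of $X'$ and $\sigma''$ of $X''$. The compatibility $\cL'|_{X'\cap X''}=\cL''|_{X'\cap X''}$ together with the min/max conditions lets one concatenate $\sigma'$ and $\sigma''$ into a single total ordering $\sigma$ of $X$ and amalgamate $f'$ and $f''$ into a cluster-marker map $f\colon cl(T)\to X$ for $T$ and $\sigma$ satisfying $\cL_{(T,f)}=\cL$; Theorem~\ref{theo: distinguished-lasso-verification} then immediately delivers the distinguished minimal topological lasso conclusion, and $|\cL|=|X|-1=|\iV(T)|$ confirms minimality via Lemma~\ref{lem:size-A(v)}.

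The main obstacle is exactly this amalgamation. A priori, $T$ may carry an interior vertex whose two child subtrees lie wholly in $X'\setminus X''$ and $X''\setminus X'$ respectively, and no cord of $\cL' \cup \cL''$ can cross such a vertex --- so for the result to hold, the min/max conditions must rule this possibility out. I would verify that whenever both min/max conditions are satisfied, every cluster $L_T(v)$ with $v\in \iV(T)$ is already a cluster of either $T|_{X'}$ or $T|_{X''}$, which is what allows $f$ and $\sigma$ to be built consistently and what guarantees a crossing cord for $v$ from the corresponding sub-lasso. This compatibility check between the cluster structures of $T$, $T'$ and $T''$ under the min/max hypothesis is the technical heart of the argument; the rest is routine bookkeeping via Theorems~\ref{theo:characterization-topology}, \ref{theo: distinguished-lasso-verification} and \ref{theo:characterization}.
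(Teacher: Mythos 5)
Your first half is correct and cleanly executed: for a binary tree every block of $\Gamma(\cL)$ has exactly two vertices by Proposition~\ref{prop:x-i-unique}, so a distinguished minimal topological lasso has $\Gamma(\cL)$ equal to a Hamiltonian path, and your edge count together with the degree analysis at the two ends of $P_\cap$ correctly shows that the min/max conditions are equivalent to $\Gamma(\cL'\cup\cL'')$ being a Hamiltonian path on $X$. In particular the ``only if'' direction of the corollary follows from this. (Note that the paper states this corollary without proof, so there is no official argument to compare against.)

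The gap is exactly where you locate it, and it cannot be closed in the way you propose. The deferred claim --- that the min/max conditions force every cluster $L_T(v)$, $v\in\iV(T)$, to be a cluster of $T|_{X'}$ or of $T|_{X''}$, so that every interior vertex of $T$ receives a crossing cord from $\cL'$ or $\cL''$ --- is false. Take $X'=\{a,b,c\}$ and $X''=\{b,c,d\}$, let $T'$ be the $X'$-tree with cherry $\{b,c\}$ and $a$ adjacent to the root, let $T''$ be the $X''$-tree with cherry $\{b,c\}$ and $d$ adjacent to the root, and put $\cL'=\{ab,bc\}$, $\cL''=\{bc,cd\}$. All hypotheses of the corollary hold, and with the orientation $a<b<c<d$ one has $\min_{\cL'}(X'\cap X'')=b=\min_{\cL''}(X'')$ and $\max_{\cL'}(X'\cap X'')=c=\max_{\cL'}(X')$, so both min/max conditions are satisfied and $\Gamma(\cL)$ is the Hamiltonian path $a,b,c,d$. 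Now let $T$ be the binary $X$-tree whose root has the two cherries $\{a,d\}$ and $\{b,c\}$ as its child subtrees. Then $T$ displays both $T'$ and $T''$, yet the vertex $v=lca_T(a,d)$ has its two child subtrees contained in $X'\setminus X''$ and $X''\setminus X'$ respectively, no cord of $\cL=\{ab,bc,cd\}$ joins $a$ to $d$, and so $G(\cL,v)$ is not a clique; by Theorem~\ref{theo:characterization-topology}, $\cL$ is not even a topological lasso for $T$. Hence sufficiency of the min/max conditions fails for this choice of $T$: either the statement needs an additional hypothesis on $T$ (for instance that every interior vertex of $T$ is already an interior vertex of $T|_{X'}$ or of $T|_{X''}$, which holds for the supertrees obtained by grafting $T''$ onto $T'$ along the overlap, and under which your amalgamation of $\sigma'$, $\sigma''$ and $f'$, $f''$ does go through), or the quantification over $T$ must be existential. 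As written, your argument establishes necessity of the min/max conditions but not sufficiency, and sufficiency is not salvageable without restricting $T$.
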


Continuing with the assumptions of Corollary~\ref{cor:supertree},
we also have that if 
$\min_{\cL'}(X'\cap X'')\in \{\min_{\cL'}(X'), \min_{\cL''}(X'')\}$
and $\max_{\cL'}(X'\cap X'')\in \{\max_{\cL'}(X'), \max_{\cL''}(X'')\}$
holds then $\cL'\cup \cL''$ is a (minimal) strong lasso for $T$
as every minimal topological lasso for an $X$-tree is also
an equidistant lasso for that tree. However, not all strong 
lassos for $T$ are
of this form. An example for this is furnished for $X'=\{a,c,d\}$ 
and $X''=\{a,b,c\}$ by the $X'$-tree $T'$, the $X''$-tree 
$T''$ and the $X'\cup X''$-tree $T$ depicted in Fig.~\ref{fig:supertree}
along with the set $\cL'=\{cd\}$  and $\cL''=\{ab,bc\}$ of cords
of $X'$ and $X''$, respectively. Clearly,
$T$ is a supertree of $\{T',T''\}$ and $\cL=\cL'\cup\cL''$   
is a strong lasso for $T$ but $\cL'$ is not even an 
equidistant lasso for $T'$. Investigating further the interplay
between minimal topological lassos for $X$-trees and minimal topological
lassos for supertrees that display them might therefore be of interest.

\begin{figure}[h]
\begin{center}
\input{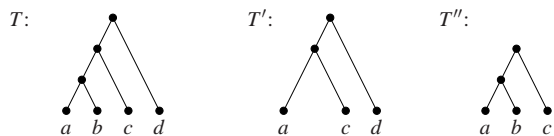}
\end{center}
\caption{\label{fig:supertree}
For $X'=\{a,c,d\}$ and $X''=\{a,b,c\}$ the $X'\cup X''$-tree $T$ is a supertree
for the depicted $X'$ and $X''$ trees $T'$ and $T''$, respectively. Clearly,
$\cL'=\{cd\}$ and $\cL''=\{ab,bc\}$ are sets of cords of $X'$ and $X''$,
respectively, and $\cL=\cL'\cup\cL''$ is a strong lasso for $T$ but
$\cL'$ is not even an equidistant lasso for $T'$.
}
\end{figure}

We conclude with returning to Fig.~\ref{fig:block-graph-motivation}
which depicts two non-equivalent $X$-trees that are topologically
lassoed by the same set $\cL$ of cords of $X$. In fact,
$\cL$ is even a minimal topological lasso for both of them. 
Understanding better the relationship between $X$-trees that are
topologically lassoed by the same set of cords of $X$ might also
be of interest to study further.

\bibliographystyle{spbasic} 
\bibliography{phylogenetics4}
\end{document}